\theoremstyle{plain}
\newtheorem{theorem}{Theorem}
\newtheorem{corollary}[theorem]{Corollary}
\newtheorem{proposition}[theorem]{Proposition}
\newtheorem{lemma}[theorem]{Lemma}
\theoremstyle{definition}
\newtheorem{definition}[theorem]{Definition}
\newtheorem{example}[theorem]{Example}
\numberwithin{equation}{section}
\numberwithin{theorem}{section}
\begin{document}

\centerline{\Large {\bf Geometric features of Vessiot--Guldberg Lie algebras}}
\vskip 0.15cm
\centerline{\Large{\bf of conformal and Killing  vector fields on $\mathbb{R}^2$}}
\vskip 0.35cm

\centerline{M.M. Lewandowski and J. de Lucas}
\vskip 0.7cm

\centerline{Department of Mathematical Methods in Physics, University of Warsaw,}
\centerline{ul. Pasteura 5, 02-093, Warszawa, Poland}
\medskip

\begin{abstract}
This paper locally classifies finite-dimensional Lie algebras of conformal and Killing vector fields on $\mathbb{R}^2$ relative to an arbitrary pseudo-Riemannian metric. Several results about their geometric properties are detailed, e.g. their invariant distributions and induced symplectic structures. Findings are illustrated with two examples of physical nature: the Milne--Pinney equation and the projective Schr\"odinger equation on the Riemann sphere.
\end{abstract}

{\it Keywords}: invariant distribution, Lie algebra of conformal vector fields, Lie algebra of Killing vector fields,  Lie system, Milne--Pinney equations, projective Schr\"odinger equations, pseudo-Riemannian geometry, symplectic geometry

{\it MSC 2010}: Primary 34A26; Secondary 53B21, 53B30, 53B50.

\section{Introduction}
The so-called {\it infinitesimal groups of transformations} were introduced by Sophus Lie towards the end of the XIX century so as to study differential equations \cite{HA75}. Nowadays such structures are referred to as {\it Lie algebras of vector fields}, and they play a key role in the research on differential equations \cite{Olver}. 

The local classification of the finite-dimensional real Lie algebras of vector fields on  the plane was accomplished by Lie \cite{HA75,Li80}. Gonz\'alez-L\'opez, Kamran, and Olver retrieved his classification via modern differential geometric techniques while solving unclear points in Lie's work that had been  misunderstood in the previous literature \cite{GKP92}. We hereupon call their classification the {\it GKO classification}.

Our article focuses upon finite-dimensional Lie algebras of conformal and Killing vector fields on $\mathbb{R}^2$ relative to a pseudo-Riemannian metric. A {\it conformal vector field} relative to a pseudo-Riemannian metric $g$ on a manifold $M$ is a vector field $X$ satisfying that $\mathcal{L}_Xg=fg$ for a function $f$ on $M$. If $f=0$, then $X$ is called a {\it Killing vector field} relative to $g$. Lie algebras of conformal and Killing vector fields are relevant due to their applications to Einstein equations \cite{SW72}, covariant quantizations \cite{DLO99,LO99}, and differential equations \cite{Ru16,Ru16II,HLT17,LA08}. 

The problem of classifying Lie algebras of conformal and/or Killing vector fields on types of manifolds has drawn certain attention \cite{KHBK15}. For instance, the local form of Lie algebras of conformal vector fields relative to flat pseudo-Riemannian metrics on a manifold $M$ is known. The case $\dim M=2$ is the most puzzling one, as it leads to an infinite-dimensional Lie algebra of conformal vector fields \cite{BL00,Ta49}. 
To this respect, Boniver and Lecomte proved that the Lie algebras of conformal polynomial vector fields on $\mathbb{R}^2$ relative to $\eta_{\pm}:={\rm d}x\otimes {\rm d}x\pm {\rm d}y\otimes {\rm d}y$ are maximal in the Lie algebra of polynomial vector fields in the variables $x,y$ \cite{BL00}. It is also interesting to study which finite-dimensional Lie algebras of conformal or Killing vector fields determine second-order ordinary differential equations \cite{Ru16,Ru16II}.

The local structure of finite-dimensional Lie algebras of conformal vector fields relative to a flat pseudo-Riemannian metric on $\mathbb{R}^2$ was studied in \cite{GL16}. This result is here extended to finite-dimensional Lie algebras of conformal vector fields relative to any pseudo-Riemannian metric on $\mathbb{R}^2$ by using their conformal flatness \cite{GL16,Ta49}. Our work also performs a local classification of Lie algebras of Killing vector fields on the plane relative to an arbitrary pseudo-Riemannian metric. Simple arguments are given so as to classify the so-called invariant distributions of finite-dimensional Lie algebras of vector fields on $\mathbb{R}^2$, which much simplifies the straightforward but long approach proposed in \cite{GL16}. This result is interesting as it appears in the analysis of finite-dimensional Lie algebras of conformal vector fields on the plane \cite{GKP92,GL16}.

More specifically, we here prove that all conformal Lie algebras of vector fields relative to an arbitrary pseudo-Riemannian metric on $\mathbb{R}^2$ are, up to a local diffeomorphism, the Lie subalgebras of the Lie algebras I$_{7}$ and I$_{11}$ of the GKO classification. Meanwhile, the Lie algebras of Killing vector fields on $\mathbb{R}^2$ are locally diffeomorphic to the Lie subalgebras of the Lie algebras I$_4$, P$^{\alpha=0}_1$, P$_2$, and P$_3$ of the GKO classification. The pseudo-Riemannian metrics associated with these Lie algebras are constructed via a certain type of tensor fields, the so-called {\it Casimir tensor fields} \cite{BHLS15}, derived by means of quadratic {\it Casimir elements} of the above-mentioned Lie algebras \cite{SW14}. This result represents a new application of the theory of Casimir tensor fields initiated in \cite{BHLS15}. Our classifications are detailed in Table \ref{table3}. 

Finally, our findings are applied to  Milne--Pinney equations and projective $t$-dependent Schr\"odinger equations, which are relevant differential equations frequently occurring in physics. These are types of {\it Lie--Hamilton systems} \cite{CLS12}, namely they are differential equations describing the integral curves of a $t$-dependent vector field taking values in a finite-dimensional Lie algebra of Hamiltonian vector fields relative to a {\it Poisson bivector}. The Poisson bivectors associated with above-mentioned differential equations were obtained in previous works \cite{BBHLS15,BHLS15} by means of tedious calculations or {\it ad hoc} considerations. In this work, it is shown that they can be derived geometrically in an easy manner by our here developed application of  {\it Casimir tensor fields}.

The structure of the paper goes as follows. Section 2 addresses an introduction to finite-dimensional Lie algebras of vector fields. Section 3 surveys the theory of conformal and Killing Lie algebras of vector fields on $\mathbb{R}^2$. Section 4 is devoted to the classification of finite-dimensional Lie algebras of conformal and Killing vector fields on the plane, respectively. Section 5 addresses the calculation of invariant distributions for finite-dimensional Lie algebras of vector fields. Finally, Section 6 illustrates several applications of our results to Milne--Pinney and projective Schr\"odinger equations on the complex projective space $\mathbb{C}P^1$. 

\section{Vessiot--Guldberg Lie algebras}\label{ALVG}
This section surveys known results on the theory of Lie algebras of vector fields on the plane. Special attention is paid to finite-dimensional Lie algebras of vector fields, the so-called {\it Vessiot--Guldberg Lie algebras} \cite{JdL11}. To simplify our presentation,  manifolds are hereafter assumed to be connected.

Let $V$ be a Lie algebra with a Lie bracket $[\cdot,\cdot]:V\times V\rightarrow V$. If $\mathcal{A}$ and $\mathcal{B}$ are subsets of $V$, then  $[\mathcal{A},\mathcal{B}]$ is defined to be the linear subspace of $V$ generated by the Lie brackets between elements of $\mathcal{A}$ and $\mathcal{B}$.

A {\it Stefan-Sussmann distribution} on $M$ is a subset  $\mathcal{D}\subset TM$ such that $\mathcal{D}_\xi:=T_\xi M\cap \mathcal{D}$ is not empty for every $\xi\in M$. To simplify our terminology, we will refer to Stefan-Sussmann distributions as distributions. The dimension of $\mathcal{D}_\xi$ is called the {\it rank} of $\mathcal{D}$ at $\xi$. The distribution $\mathcal{D}$ is {\it regular} at $\xi\in M$ if the rank of $\mathcal{D}$ is constant at points of an open $U\subset M$ containing $\xi$. The {\it domain} of  $\mathcal{D}$ is the set ${\rm Dom}(\mathcal{D})$ of its regular points. If ${\rm Dom}(\mathcal{D})=M$, then $\mathcal{D}$ is called  {\it regular}. If a vector field $X$ takes values in $\mathcal{D}$, it is written $X\in \mathcal{D}$. We write $\mathfrak{X}(M)$ for the space of vector fields on $M$.

\begin{definition} Let $V$ be a Vessiot--Guldberg Lie algebra on $M$. 
	The so-called {\it distribution $\mathcal{D}^V$ associated with  $V$} takes the form
	$$
	\mathcal{D}^V_\xi:=\{X_\xi:X\in V\}\subset TM,\quad \forall \xi\in M.
	$$  
	A	{\it generic point} for $V$ is a regular point of  $\mathcal{D}^V$. The {\it domain} of $V$ is the set, ${\rm Dom}\,V$, of generic points of $V$.
\end{definition}

\begin{example}\label{DysReg} Consider the Lie algebra of vector fields  on $\mathbb{R}^2$ given by
	\begin{equation}\label{I4}
	{\rm I}_4:=\langle \partial_x+\partial_y, x\partial_x+y\partial_y, x^2\partial_x+y^2\partial_y \rangle.
	\end{equation} 
The rank of the distribution $\mathcal{D}^{{\rm I}_4}$ associated with I$_4$ at  $(x,y)\in \mathbb{R}^2$ is given by the rank of 
	\begin{equation*}
	M(x,y):=\begin{pmatrix}
	1 & x & x^2\\
	1 & y & y^2
	\end{pmatrix}.
	\end{equation*}
	The rank of $M(x,y)$ is two if and only if $x\neq y$. Hence, ${\rm Dom}(V)=\{(x,y)\in\mathbb{R}^2:x\neq y\}$.
\end{example}
\begin{definition} An {\it invariant distribution} of a Lie algebra $V$ of vector fields on $M$ is a distribution $\mathcal{D}$ on $M$ different from $M\times\{0\}$ and $TM$ satisfying that for every vector field $Y\in \mathcal{D}$ and $X\in V$, the vector field $[Y,X]$ takes values in $\mathcal{D}$. 
\end{definition}	

It is straightforward to see that a distribution $\mathcal{D}$ is invariant relative to $V$ if and only if the Lie bracket of any element from a fixed basis of $V$ and any vector field of a fixed family of vector fields spanning $\mathcal{D}$ takes values in $\mathcal{D}$.

\begin{example}
	The Lie algebra I$_4$ on $\mathbb{R}^2$ admits two-invariant distributions $\mathcal{D}^x$ and $\mathcal{D}^y$ generated by $\partial_x$ or $\partial_y$, correspondingly.  Indeed, $\mathcal{D}^x$ is invariant relative to I$_4$ because the Lie bracket of a generating element of $\mathcal{D}^x$, e.g. $\partial_x$, and any element of the basis (\ref{I4}) of I$_4$ belongs to $\mathcal{D}^x$:
	$$
	[\partial_x,\partial_x+\partial_y]=0, \quad [\partial_x, x\partial_x+y\partial_y]=\partial_x, \quad [\partial_x,x^2\partial_x+y^2\partial_y]=2x\partial_x.
	$$
	Similarly, it can be proved that $\mathcal{D}^y$ is an invariant distribution relative to I$_4$. 
\end{example}

\begin{definition}
	A finite-dimensional Lie algebra of vector fields $V$ on $ \mathbb{R}^2$ is {\it imprimitive}, if it admits an invariant  distribution. A Lie algebra $V$ is {\it one-imprimitive} if it has only one invariant distribution, and it is {\it multiply imprimitive} if it admits more than one. If $V$ has not invariant distributions, then $V$ is called {\it primitive}. 
\end{definition}

Lie proved \cite{GKP92,HA75} that every Vessiot--Guldberg Lie algebra  on $\mathbb{R}^2$ is locally diffeomorphic around a generic point to one of the Lie algebras described in  Table \ref{table3}. 

\section{Conformal geometry and Lie algebras of vector fields on the plane}

This section surveys the fundamentals on conformal geometry and related Vessiot--Guldberg Lie algebras of conformal and Killing vector fields to be employed hereupon. 

\begin{definition} A {\it pseudo-Riemannian manifold} is a pair $(M,g)$, where $g$ is a symmetric non-degenerate two-covariant tensor field on $M$: the {\it pseudo-Riemannian metric} of $(M,g)$.
\end{definition}

To simplify the notation, $g$ will be called a {\it metric} and Einstein summation convention will be assumed henceforth. 
In coordinates $\{x^i\}$ a metric $g$ is written as $g=g_{ij}{\rm d}x^i\otimes{\rm d}x^j$. 

\begin{example} A relevant role is played subsequently by the metrics 
	  $g_E:={\rm d}x\otimes {\rm d}x + {\rm d}y\otimes {\rm d}y$ and $g_H:={\rm d}x\otimes {\rm d}y + {\rm d}y\otimes {\rm d}x$ respectively. Every flat metric on $\mathbb{R}^2$ can be mapped into one of them, up to a non-zero multiplicative constant, by an appropriate diffeomorphism.
\end{example}

\begin{definition} A vector field  $X$ on $M$ is {\it conformal} relative to the metric $g$ if $\mathcal{L}_Xg=f_Xg$ for a certain function  	$f_X\in C^\infty(M)$. The function $f_X$ is called the {\it potential} of $X$. A {\it Killing vector field} is a conformal vector field with $f_X=0$.
\end{definition}
\begin{example} The {\it Schwarzschild metric} \cite{Wa84} is a metric given by 
	\begin{equation*}
	g_{S}:=\bigg(1-\frac{2M}{r}\bigg){\rm d}t\otimes {\rm d}t-\bigg(1-\frac{2M}{r}\bigg)^{-1}{\rm d}r\otimes {\rm d}r - r^2\bigg({\rm d}\theta\otimes {\rm d}\theta+\sin^2\theta{\rm d}\varphi\otimes{\rm d}\varphi\bigg),\,M>0.
	\end{equation*}
	This metric appears in the description of black holes \cite{Wa84,SW72}. The vector fields $\partial_t,\ \partial_\varphi$ are Killing vector fields relative to the Schwarzschild metric, namely $\mathcal{L}_{\partial_t}g_{S}=\mathcal{L}_{\partial_\varphi}g_S=0$.
\end{example}

\begin{definition} Two metrics $g_1$ and $g_2$ on $M$ are {\it conformally equivalent} if 
	\begin{equation*}
	\exists \Omega>0, \Omega\in C^\infty(M),\quad g_1=\Omega^2g_2.
	\end{equation*}
\end{definition}
\begin{definition}
	A pseudo-Riemannian manifold $(M,g)$ is {\it conformally flat} if $g$ is locally conformally equivalent to a flat metric, i.e. there exists for each $x\in M$ an open $U^x\ni x$ and a function $f\in C^{\infty}(U^x)$ such that $g=e^{2f}g_f$ on $U^x$ for a flat metric $g_f$ on $U^x$.
\end{definition}

The following well-known result will be of key importance in this work (see \cite{GLN13}).

\begin{theorem}\label{conf}
	Every metric on the plane is conformally flat.
\end{theorem}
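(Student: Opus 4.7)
The plan is to work locally around any point $p\in\mathbb{R}^2$, splitting cases by the signature of $g$ at $p$ (which is locally constant by continuity of $\det g$), and to exhibit in each case a chart in which $g$ is conformally equivalent to $g_E$ or $g_H$ from the preceding example.

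First I would handle the indefinite (signature $(1,1)$) case, which is purely differential-geometric. Here $g$ has two distinct null lines at every point, giving two smooth rank-one distributions $\mathcal{L}_\pm\subset T\mathbb{R}^2$ spanned by null vector fields. Each is integrable by Frobenius, so locally one can pick functions $u,v$ whose level curves coincide with the leaves of $\mathcal{L}_-$ and $\mathcal{L}_+$ respectively. Transversality of the two distributions shows that $(u,v)$ is a chart, and in it both $\partial_u$ and $\partial_v$ are null. Consequently $g=h\,({\rm d}u\otimes {\rm d}v+{\rm d}v\otimes{\rm d}u)$ with $h$ nowhere zero (else $g$ would degenerate), so $g$ is conformally equivalent to $g_H$ on this neighbourhood.

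Next I would treat the definite case via isothermal coordinates. Writing $g=E\,{\rm d}x\otimes {\rm d}x + F({\rm d}x\otimes {\rm d}y+{\rm d}y\otimes {\rm d}x) + G\,{\rm d}y\otimes {\rm d}y$ with $EG-F^2>0$ and, say, $E>0$, a direct algebraic check shows that the complex $1$-form $\omega := \sqrt{E}\,{\rm d}x + (F+i\sqrt{EG-F^2})/\sqrt{E}\,{\rm d}y$ satisfies $2g = \omega\otimes\overline{\omega}+\overline{\omega}\otimes\omega$. Conformal flatness then reduces to producing locally a non-vanishing complex function $\lambda$ and a complex function $w=u+iv$ with $dw=\lambda\,\omega$ and $du\wedge dv\neq 0$, since such a $w$ yields $g=|\lambda|^{-2}({\rm d}u\otimes {\rm d}u + {\rm d}v\otimes {\rm d}v) = |\lambda|^{-2}g_E$.

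The main obstacle is precisely this integrating-factor step: it is equivalent to the local solvability of a Beltrami equation $\partial_{\bar z}w = \mu\,\partial_z w$ with $|\mu|<1$ determined algebraically from $(E,F,G)$. This is a genuine analytical fact (Korn--Lichtenstein in the smooth category, Ahlfors--Bers at lower regularity) and not a differential-geometric one, so in a complete proof I would simply invoke it through the reference \cite{GLN13} cited alongside the statement rather than reprove it here.
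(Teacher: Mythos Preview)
Your sketch is correct in both cases. The indefinite case via null coordinates is clean and complete as stated, and in the definite case you have correctly identified that the existence of isothermal coordinates is the nontrivial analytic input (Korn--Lichtenstein / Beltrami), which you sensibly defer to the cited reference.

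Comparison with the paper: the paper does not actually prove this theorem. It is stated as a well-known result with the parenthetical ``(see \cite{GLN13})'' and no proof is given. So your proposal goes well beyond what the paper does. The trade-off is clear: the paper treats conformal flatness in dimension two as background and simply cites it, keeping the focus on the classification results that follow; your write-up, by contrast, supplies an outline of why the result holds, at the cost of invoking the Beltrami/Korn--Lichtenstein machinery that the paper avoids mentioning explicitly. If you want to match the paper's treatment, a one-line citation suffices; if you want a self-contained account, your two-case argument is the standard one.
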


Let us now discuss the Lie algebras of conformal and Killing vector fields relative to a flat  metric on $\mathbb{R}^2$. It follows from the definition of conformal and Killing vector fields that conformal vector fields relative to a metric $g$ on $M$ generate a Lie algebra containing, as a Lie subalgebra, the Killing vector fields relative to $g$. 

We now prove the following result, which ensures that the classification of Lie algebras of conformal vector fields on $\mathbb{R}^2$ relative to metrics can be reduced to the classification of Lie algebras of conformal vector fields relative  to $g_E$ and $g_H$.

\begin{proposition} All Lie algebras of conformal vector fields on $\mathbb{R}^2$ relative to definite (resp. indefinite) metrics are diffeomorphic.
	\end{proposition}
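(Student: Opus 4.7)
The plan is to combine two basic observations: that the space of conformal vector fields depends only on the conformal class of the metric, and that on $\mathbb{R}^2$ every metric is locally conformally flat (Theorem \ref{conf}). Together these reduce the problem to comparing the two model cases $g_E$ and $g_H$, for which the flat-metric normal form already recalled in the excerpt does the rest.

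First, I would verify the conformal invariance of $\mathfrak{conf}(g)$. If $\Omega\in C^\infty(M)$ is positive and $\mathcal{L}_Xg=f_Xg$, then expanding
\begin{equation*}
\mathcal{L}_X(\Omega^2 g)=(\mathcal{L}_X\Omega^2)g+\Omega^2\mathcal{L}_Xg=(2X\ln\Omega+f_X)\,\Omega^2g,
\end{equation*}
so $X$ is still conformal for $\Omega^2g$ (with a modified potential). Hence $\mathfrak{conf}(\Omega^2g)=\mathfrak{conf}(g)$ as Lie subalgebras of $\mathfrak{X}(M)$. In particular, conformality is insensitive to any positive rescaling, and therefore to any nonzero multiplicative constant appearing in the normalisation of a flat metric.

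Next I would invoke Theorem \ref{conf} to write, around any chosen point, an arbitrary metric $g$ on $\mathbb{R}^2$ in the form $g=e^{2f}g_f$ for some flat metric $g_f$. Because $\Omega^2=e^{2f}>0$, the signature of $g_f$ coincides with that of $g$: a definite $g$ produces a definite $g_f$, and an indefinite $g$ an indefinite one. By the normal form recalled after the definition of $g_E$ and $g_H$, a diffeomorphism $\phi_g$ carries $g_f$ to a nonzero constant multiple of $g_E$ in the definite case, and of $g_H$ in the indefinite case. Combining with the first step,
\begin{equation*}
\mathfrak{conf}(g)=\mathfrak{conf}(g_f)=\phi_g^{*}\mathfrak{conf}(g_E)\qquad\bigl(\text{resp. }\phi_g^{*}\mathfrak{conf}(g_H)\bigr).
\end{equation*}
Given two definite metrics $g_1,g_2$ on $\mathbb{R}^2$, the local diffeomorphism $\phi_{g_2}\circ\phi_{g_1}^{-1}$ then identifies $\mathfrak{conf}(g_1)$ with $\mathfrak{conf}(g_2)$, and identically with $g_H$ in the indefinite case.

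The only genuine technical point is the first one, the conformal invariance of $\mathfrak{conf}(g)$; everything else is bookkeeping around Theorem \ref{conf} and the flat-metric normal form. A minor caveat to state explicitly is that the diffeomorphisms produced are local (around a generic point), which matches the local nature claimed throughout the classification, and that one must check that the signature is preserved under the conformal flattening — this is immediate from $\Omega>0$ but worth mentioning, since it is what keeps the definite and indefinite cases from mixing.
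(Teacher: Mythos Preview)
Your proof is correct and follows essentially the same route as the paper: you establish that $\mathfrak{conf}(g)$ depends only on the conformal class of $g$ via the same Leibniz computation (the paper writes the factor as $e^f$ rather than $\Omega^2$), then invoke Theorem~\ref{conf} and the flat normal forms $g_E$, $g_H$ to reduce to the two model cases. Your explicit remarks on signature preservation and on the local nature of the diffeomorphisms are welcome clarifications that the paper leaves implicit.
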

\begin{proof}
If $V$ is a Lie algebra of conformal vector fields relative to $g$, then $V$ is also a Lie algebra of conformal vector fields relative to any other conformally equivalent metric. This amounts to the fact that every conformal vector field relative to a metric is a conformal vector field relative to any conformally equivalent metric. Let us prove this. Let $X$ be a conformal vector field relative to a metric $g_1$ on $\mathbb{R}^2$ and let $g_2$ be a metric conformally equivalent to $g_1$. Hence, locally around each point of $\mathbb{R}^2$, it is possible to write $g_1=e^fg_2$ for a certain  function $f\in C^\infty(\mathbb{R}^2)$. Thus,
$$
f_Xg_1=\mathcal{L}_X{g_1}=\mathcal{L}_X{e^fg_2}=(Xe^f)g_2+e^f\mathcal{L}_Xg_2\Rightarrow \mathcal{L}_Xg_2=(f_X-Xf)g_2,
$$
and $X$ is a conformal vector field relative to $g_2$. 

Since all metrics on the plane are conformally flat, the Lie algebra of conformal vector fields of a general metric on $\mathbb{R}^2$ is the Lie algebra of conformal vector fields of a flat metric. Moreover, flat metrics can be mapped into $g_E$ and $g_H$ through a local diffeomorphism. Therefore, the Lie algebra of conformal vector fields relative to a flat metric is, up to a diffeomorphism, the Lie algebra of conformal vector fields with respect to $g_E$ or $g_H$ depending on whether the initial metric was definite or indefinite, respectively. 

In consequence, the Lie algebra of conformal vector fields relative to a metric on $\mathbb{R}^2$ is diffeomorphic to the Lie algebra of conformal vector fields relative to $g_E$, if the metric is definite, and to $g_H$, if the metric is indefinite. 
\end{proof}

The above proposition allows us to slightly generalize the typical definition of conformal Lie algebras on $\mathbb{R}^2$ in terms of flat metrics as follows.

\begin{definition} We call $\mathfrak{conf}(p,q)$ the abstract Lie algebra isomorphic to the Lie algebra of conformal vector fields relative to a metric on $\mathbb{R}^2$ with signature $(p,q)$.
\end{definition}

Let us now analyse the Lie algebras of conformal vector fields relative to a definite and indefinite metric on $\mathbb{R}^2$, which amounts to studying the Lie algebras of conformal vector fields relative to $g_E$ or $g_H$, respectively. 

Consider the flat metric $g_E$ on $\mathbb{R}^2$ and let $X=X^x\partial_x+X^y\partial_y$. If  $X$ is a conformal vector field relative to  $g_E$ on $\mathbb{R}^2$, then $\mathcal{L}_Xg_E=f_Xg_E$ for 
 a certain $f_X\in C^\infty(\mathbb{R}^2)$. Hence
$$\mathcal{L}_Xg_E
= 2\partial_xX^x{\rm d}x\otimes {\rm d}x + 2\partial_yX^y{\rm d}y\otimes {\rm d}y+\big(\partial_yX^x+\partial_xX^y\big)({\rm d}x\otimes {\rm d}y+{\rm d}y\otimes{\rm d}x)
=f_Xg_E,   
$$
which amounts to 
$
\partial_xX^y+\partial_yX^x=0, \partial_xX^x=\partial_yX^y=f_X/2.
$
 Therefore, $X$ is a conformal vector field relative to $g_E$ if and only if the complex function $f:\mathbb{C} \ni z:=x+{\rm i}y \mapsto X^x(x,y)+{\rm i}X^y(x,y)\in \mathbb{C}$, where $x,y\in \mathbb{R}$, satisfies the {\it Cauchy--Riemann conditions}. This implies that 
$$
\mathfrak{conf}(2,0)=\mathfrak{conf}(0,2)\simeq \big\{f\partial_z|\ f:\mathbb{C}\rightarrow \mathbb{C}\ {\rm is\ holomorphic}\big\},
$$
and the Lie algebra $\mathfrak{conf}(2,0)$, which is a realification of the referred to as {\it Witt algebra}, is infinite-dimensional. 

Let us now consider the Lie algebra of conformal vector fields  of the hyperbolic metric $g_H$. If  $X$ is a conformal vector field relative to $g_H$, then
$$
\mathcal{L}_Xg_H
=\big(\partial_xX^x+\partial_yX^y\big)g_H+2\partial_yX^x{\rm d}y\otimes {\rm d}y+2\partial_xX^y{\rm d}x\otimes{\rm d}x=f_Xg_H,   
$$
which occurs if and only if 
$
\partial_yX^x=\partial_xX^y=0,  \partial_xX^x+\partial_yX^y=f_X.
$
This implies that every conformal vector field relative to $g_H$ takes the form
$$
X=X^x(x)\partial_x+X^y(y)\partial_y\Longrightarrow \mathfrak{conf}(1,1)\simeq \mathfrak{X}(\mathbb{R})\oplus \mathfrak{X}(\mathbb{R}).
$$

The study of previous results and its posterior use in this work demand the analysis of the {\it pseudo-orthogonal Lie algebras}. These are the matrix Lie algebras of the form
$$
\mathfrak{so}(p,q):=\big\{A\in\mathfrak{gl}(p+q):A^T\eta+\eta A=0,\ \eta :={\rm diag}(\underbrace{+\ldots +}_p\underbrace{-\ldots -}_q) \big\},
$$
where $\mathfrak{gl}(p+q)$ is the space of $(p+q)\times(p+q)$ matrices with real entries. Then, it is simple to see that $ {\rm P}_7\simeq \mathfrak{so}(3,1)$,
where P$_7$ is given in Table \ref{table3}, is a Lie algebra of conformal vector fields relative to $g_E$. Moreover, ${\rm I}_{11}\simeq \mathfrak{so}(2,2)$ is a Lie algebra of conformal vector fields with respect to $g_H$.
The Lie algebras P$_7$ and I$_{11}$ are some of the most relevant Lie algebras treated in this work.

 On the other hand, two conformally equivalent metrics may admit different Lie algebras of Killing vector fields. This will be  illustrated by the following proposition and our forthcoming classification of Lie algebras of Killing vector fields on $\mathbb{R}^2$.
 
 \begin{proposition} If $V$ is a Lie algebra of Killing vector fields relative to a metric $g$ on $M$ and $\mathcal{D}^V=TM$, then the scalar curvature $R$ of $g$ is constant.
 	\end{proposition}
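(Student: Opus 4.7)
The plan is to exploit the naturality of the scalar curvature under diffeomorphisms, combined with the hypothesis that Killing fields in $V$ span every tangent space.

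First I would establish the key fact that every Killing vector field $X$ relative to $g$ annihilates the scalar curvature, i.e. $X(R)=0$. The reason is that $R$ is constructed functorially from $g$ (via the Levi-Civita connection, the Riemann tensor $\mathrm{Riem}(g)$, and two successive contractions producing the Ricci tensor and then $R$). Since the Lie derivative along any vector field commutes with these natural operations, one obtains
\begin{equation*}
\mathcal{L}_X R = \mathrm{tr}_g\bigl(\mathcal{L}_X \mathrm{Ric}(g)\bigr) - \mathcal{L}_X g \cdot \mathrm{Ric}(g)\,\text{-type terms},
\end{equation*}
which all vanish once $\mathcal{L}_X g = 0$. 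Concretely, the universal identity $\mathcal{L}_X[\,\text{scalar invariant of }g\,]=[\,\text{same invariant of}\,\mathcal{L}_X g\,]$ applied to the scalar curvature functional gives $\mathcal{L}_X R = 0$, and since $R$ is a function this is the same as $X(R)=0$.

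Next I would translate the hypothesis $\mathcal{D}^V = TM$ into a pointwise statement: for every $\xi \in M$, the values $\{X_\xi : X \in V\}$ span $T_\xi M$. Combined with the previous step, this means that for every tangent vector $v\in T_\xi M$ we can write $v = X_\xi$ for some $X \in V$, and thus $\mathrm{d}R_\xi(v) = X_\xi(R) = 0$. Therefore $\mathrm{d}R$ vanishes identically on $M$.

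Finally, since manifolds have been assumed connected from the start of Section~\ref{ALVG}, the vanishing of $\mathrm{d}R$ everywhere forces $R$ to be a constant function on $M$. The only nontrivial step is the first one: verifying that Killing fields kill the scalar curvature. This is classical, but if a self-contained argument is preferred, one can derive it directly by Lie-differentiating the coordinate expression $R = g^{ij}R_{ij}$ and using that $\mathcal{L}_X g = 0$ implies both $\mathcal{L}_X g^{ij}=0$ and, through the Koszul formula, $\mathcal{L}_X \Gamma^k_{ij}=0$, hence $\mathcal{L}_X R_{ij}=0$.
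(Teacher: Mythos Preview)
Your proposal is correct and follows essentially the same approach as the paper: Killing fields annihilate $R$ by naturality, and since their values span each tangent space, $\mathrm{d}R=0$ and connectedness gives $R$ constant. The paper's proof is a terse two-line version of exactly this argument; your write-up is simply more explicit about the naturality step and the use of connectedness.
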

 \begin{proof} The Killing vector fields of $g$ are symmetries of the scalar curvature $R$ thereof. Since the vector fields in $V$ span the whole distribution $TM$ and $R$ is a function, it follows that $R$ must be a first integral of every vector field on $M$ and consequently $R$ must be a constant.
 	\end{proof}

\section{Vessiot--Guldberg Lie algebras of conformal and Killing vector fields on $\mathbb{R}^2$}
The work \cite{GL16} accomplished a classification of Vessiot--Guldberg Lie algebras of vector fields relative to two types of flat metrics on the plane: the Euclidean (definite) and hyperbolic (indefinite) ones. That work did not  highlight that all metrics on $\mathbb{R}^2$ are conformally flat and, as noted in previous sections, that being a Lie algebra of conformal vector fields relative to a metric $g$ amounts to being a Lie algebra of conformal vector fields of a Euclidean or hyperbolic metric. Hence, it is obvious that Lemmas 7.1 and 7.2 as well as Propositions 7.4  and 7.5 in \cite{GL16}, which only apply to Euclidean and hyperbolic metrics, can be generalized as follows.

\begin{lemma}\label{con} There exist no conformal vector fields $X_1,X_2$ relative to a conformally flat Riemannian metric on $\mathbb{R}^n$ such that $n>1$  and $X_1\wedge X_2=0$. 
\end{lemma}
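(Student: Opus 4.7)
First, I would invoke the hypothesis that $g$ is conformally flat together with the earlier observation that conformal vector fields are invariant under conformal rescalings of the metric, to reduce the problem, locally, to the case $g=g_E:=\delta_{ij}\,{\rm d}x^i\otimes{\rm d}x^j$. Since the claim is vacuous if one of the vector fields is identically zero, I may fix an open set $U\subset\mathbb{R}^n$ on which $X_1\neq 0$ and use $X_1\wedge X_2\equiv 0$ to write $X_2=hX_1$ for some $h\in C^\infty(U)$.

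Next, from the Leibniz-type identity
$$
\mathcal{L}_{hX_1}g \;=\; h\,\mathcal{L}_{X_1}g + {\rm d}h\otimes X_1^\flat + X_1^\flat\otimes {\rm d}h,\qquad X_1^\flat:=g(X_1,\cdot),
$$
combined with the conformal equations $\mathcal{L}_{X_k}g=f_kg$ for $k=1,2$, I would subtract to obtain
$$
(f_2-hf_1)\,g \;=\; {\rm d}h\otimes X_1^\flat + X_1^\flat\otimes {\rm d}h \quad \text{on } U.
$$
The right-hand side, viewed via $g$ as an endomorphism of $TU$, sends $Y\mapsto (Yh)X_1+g(X_1,Y)\nabla h$, so its rank is at most two.

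For $n\geq 3$, the endomorphism $(f_2-hf_1)\,\mathrm{Id}$ on the left has full rank $n>2$ wherever $f_2-hf_1\neq 0$, forcing $f_2=hf_1$ pointwise on $U$. The residual equation ${\rm d}h\otimes X_1^\flat + X_1^\flat\otimes{\rm d}h=0$, evaluated on the pair $(X_1,X_1)$ and then on $(X_1,Y)$, combined with $g(X_1,X_1)>0$ (from positive definiteness and $X_1\neq 0$ on $U$), yields in turn $X_1(h)=0$ and $Y(h)=0$ for every $Y$, so $h$ is locally constant on $U$.

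The main obstacle is the case $n=2$, where the right-hand side can itself have rank two and the linear-algebra argument breaks down. Here I would switch to the holomorphic description already recorded in the paper: for $g=g_E$, a conformal vector field corresponds to a holomorphic function $F_k:=X_k^x+iX_k^y$ of $z=x+iy$, and $X_1\wedge X_2=0$ translates to $\mathrm{Im}(\overline{F_1}F_2)=0$. Thus $F_2/F_1=\overline{F_1}F_2/|F_1|^2$ is a real-valued meromorphic function on $\{F_1\neq 0\}$, which by the open mapping theorem must be locally constant. In either range of $n$, $X_2$ is therefore a constant multiple of $X_1$ on $U$, contradicting the linear independence of $X_1$ and $X_2$ implicit in the statement, and completing the proof.
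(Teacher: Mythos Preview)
The paper does not give its own proof of this lemma: it is stated as the extension of Lemmas~7.1 and~7.2 of \cite{GL16} from flat to conformally flat Riemannian metrics, the extension being justified by the preceding proposition that conformal vector fields are preserved under conformal rescalings. So there is no in-paper argument to compare against; your task is effectively to supply the proof the paper outsources.

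Your argument does this correctly and in more detail than anything in the text. The reduction to $g_E$, the Leibniz identity for $\mathcal{L}_{hX_1}g$, the rank obstruction for $n\geq 3$, and the switch to the holomorphic description for $n=2$ are all sound; in particular you are right that the rank-two bound on the right-hand side genuinely collapses when $n=2$, so the open-mapping-theorem argument (equivalently, the Cauchy--Riemann equations forcing a real-valued holomorphic function to be constant) is the appropriate tool there, and this is precisely the mechanism that underlies the $\mathbb{R}^2$ case treated in \cite{GL16}.

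One step deserves an extra sentence. You conclude that $h$ is \emph{locally} constant on $U=\{X_1\neq 0\}$ and then jump to ``$X_2$ is a constant multiple of $X_1$''. For $n=2$ this is fine: the zeros of the holomorphic $F_1$ are isolated, so $\{F_1\neq 0\}$ is connected and $F_2/F_1$ is a single constant, whence $F_2=cF_1$ everywhere by continuity. For $n\geq 3$ you should note that conformal vector fields of $g_E$ are real-analytic (in fact polynomial of degree at most two), so $X_2-cX_1=0$ on any non-empty open set forces $X_2=cX_1$ on all of $\mathbb{R}^n$. With that remark the passage from ``locally constant'' to global linear dependence, and hence to the contradiction with the implicit linear independence of $X_1,X_2$, is complete.
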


Lemma \ref{con} cannot be extended to $\mathbb{R}$: the vector fields $X_1:=\partial_u$ and $X_2:=u\partial_u$ are linearly independent and conformal relative to ${\rm d}u\otimes {\rm d}u$ on $\mathbb{R}$ whereas $X_1\wedge X_2=0$.

\begin{lemma}\label{one}
	Let $V$ be a Lie algebra of conformal vector fields relative to a metric $g$ on $\mathbb{R}^2$ and let  $\mathcal{D}$ be an invariant distribution relative to $V$. Therefore,
	\begin{enumerate}
		\item 
		the distribution $\mathcal{D}^\perp$ perpendicular to $\mathcal{D}$, i.e.
		$$
		\mathcal{D}^\perp_\xi:=\{X_\xi\in T_\xi M:g_\xi(X_\xi,\bar X_\xi)=0,\forall \bar X_\xi\in \mathcal{D}_\xi\},\quad \forall \xi\in M,
		$$
	    is invariant relative to $V$.
		\item The Lie algebra of conformal vector fields relative to an indefinite metric on $\mathbb{R}^2$ has, at least, two invariant distributions generated by commuting vector fields $Y_1, Y_2$. 
		\item A conformal vector field relative to an indefinite metric on $\mathbb{R}^2$ can be brought into the form $Z=f^1_ZY_1+f^2_ZY_2$, where $Y_1f^2_Z=Y_2f^1_Z=0$ for some $f^1_Z, f^2_Z\in C^{\infty}(\mathbb{R}^2)$.
	\end{enumerate}
\end{lemma}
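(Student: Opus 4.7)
The plan is to handle the three items separately, exploiting the Lie-derivative formula for a symmetric $(0,2)$-tensor in part~(1), and conformal flatness together with the null-direction structure of an indefinite planar metric in parts~(2) and~(3).

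For part~(1), I would apply the identity
$$(\mathcal{L}_Xg)(Y,\bar X) = X\!\cdot g(Y,\bar X) - g([X,Y],\bar X) - g(Y,[X,\bar X])$$
to an arbitrary $X\in V$, $Y\in\mathcal{D}^\perp$ and $\bar X\in\mathcal{D}$. The conformal condition $\mathcal{L}_Xg=f_Xg$ together with $g(Y,\bar X)=0$ makes the left-hand side vanish; the term $X\!\cdot g(Y,\bar X)$ vanishes identically; and $g(Y,[X,\bar X])=0$ because invariance of $\mathcal{D}$ forces $[X,\bar X]\in\mathcal{D}$. What remains is $g([X,Y],\bar X)=0$ for every $\bar X\in\mathcal{D}$, i.e.\ $[X,Y]\in\mathcal{D}^\perp$. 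A brief remark would be needed to check that $\mathcal{D}^\perp$ is itself a distribution in the sense used by the paper (it may coincide with $\mathcal{D}$ when $\mathcal{D}$ is a null line, but that is still an invariant distribution).

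For part~(2), I would pass to the conformal-flatness viewpoint supplied by Theorem \ref{conf}. Around any point, an indefinite metric on $\mathbb{R}^2$ is conformally equivalent to $g_H$; equivalently, local null coordinates $(x,y)$ exist in which $g=\Omega^2 g_H$. In these coordinates the light-cone decomposes into the two coordinate lines, so the two null line distributions are $\langle\partial_x\rangle$ and $\langle\partial_y\rangle$. Because the null directions of a non-degenerate two-dimensional metric are determined entirely by the conformal class, every conformal vector field preserves them, i.e.\ $[X,\partial_x]\in\langle\partial_x\rangle$ and $[X,\partial_y]\in\langle\partial_y\rangle$ for every conformal $X$. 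Thus both distributions are invariant under the full Lie algebra of conformal vector fields, and the generators $Y_1:=\partial_x$, $Y_2:=\partial_y$ commute.

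For part~(3), I would work in the same null coordinates and use the proposition preceding the lemma, which says that conformality is preserved under conformal rescaling: a conformal vector field for $g=\Omega^2 g_H$ is conformal for $g_H$. Writing $Z=Z^x\partial_x+Z^y\partial_y$, the computation carried out immediately before Lemma \ref{con} then gives $\partial_y Z^x=\partial_x Z^y=0$; setting $f^1_Z:=Z^x$, $f^2_Z:=Z^y$ yields the desired decomposition $Z=f^1_ZY_1+f^2_ZY_2$ with $Y_1f^2_Z=Y_2f^1_Z=0$. The main obstacle I expect is the geometric justification in part~(2): one must justify both the existence of local null coordinates (so that the two null distributions admit commuting generators) and the conformal invariance of the null directions. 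Both facts follow from conformal flatness of planar metrics, so the rest of the argument reduces to the explicit computation in $g_H$ coordinates already carried out in Section~3.
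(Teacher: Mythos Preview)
Your proof is correct. Note, however, that the paper does not actually prove this lemma: it states Lemma~\ref{one} (together with Lemma~\ref{con} and Propositions~\ref{ConLie}, \ref{ConLie2}) as an immediate generalization of Lemmas~7.1, 7.2 and Propositions~7.4, 7.5 of \cite{GL16}, which were formulated only for the flat metrics $g_E$ and $g_H$, and invokes Theorem~\ref{conf} to lift those results to an arbitrary planar metric. No further argument is given.

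Your write-up supplies precisely the details the paper leaves implicit, and does so along the lines the paper suggests: the Lie-derivative identity for part~(1), conformal flatness to pass to null coordinates and thereby exhibit the two commuting null generators in part~(2), and the explicit Section~3 computation of conformal vector fields for $g_H$ (together with the observation that conformality is invariant under conformal rescaling) for part~(3). So your approach is not an alternative route but rather a fleshed-out version of the sketch the paper offers. The caveat you flag in part~(1)---that $\mathcal{D}^\perp$ may coincide with $\mathcal{D}$ when $\mathcal{D}$ is null---is worth keeping, and is consistent with part~(2), where indeed each null line is its own orthogonal complement.
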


As a consequence of the above lemma, if  $V$ is an imprimitive Lie algebra of conformal vector fields relative to a definite metric, then it also leaves invariant an additional perpendicular distribution. Hence, $V$ is primitive or multiply imprimitive.

\begin{proposition}\label{ConLie} 
	The Lie algebras ${\rm I}_1$, 
	${\rm P}_1,{\rm P}_2
	,{\rm P}_3,{\rm P}_4,{\rm P}_7,{\rm I}^{\alpha=1}_8,{\rm I}^{r=1}_{14}$ are the Vessiot--Guldberg Lie algebras of conformal vector fields relative to a  definite metric on 
	$\mathbb{R}^2$. They constitute, up to a diffeomorphism, the Lie subalgebras of {\rm P}$_7$.
\end{proposition}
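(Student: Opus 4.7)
The plan is to combine Theorem \ref{conf} with the known structure of $\mathfrak{conf}(2,0)$ to show, first, that every Vessiot--Guldberg Lie algebra of conformal vector fields relative to a definite metric embeds, up to a local diffeomorphism, as a Lie subalgebra of P$_7$, and then to pick out the GKO algebras realising such an embedding.

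Let $V$ be a finite-dimensional Lie algebra of conformal vector fields relative to a definite metric $g$ on $\mathbb{R}^2$. By Theorem \ref{conf}, $g$ is locally conformally equivalent to a flat metric which, up to a local diffeomorphism, can be mapped to $g_E$. Since conformal equivalence preserves conformality of vector fields (as shown earlier in this section), $V$ becomes a Lie algebra of conformal vector fields relative to $g_E$, hence a Lie subalgebra of $\mathfrak{conf}(2,0)$. Via the Cauchy--Riemann equations, $\mathfrak{conf}(2,0)$ is the realification of the holomorphic vector fields $f\partial_z$ on $\mathbb{C}$, and I would then invoke the classical fact that every finite-dimensional real Lie subalgebra of this Witt algebra is contained in the 6-dimensional M\"obius Lie algebra generated by $\{\partial_z, z\partial_z, z^2\partial_z\}$ together with their complex conjugates, which is isomorphic to $\mathfrak{sl}(2,\mathbb{C})_{\mathbb{R}}\simeq \mathfrak{so}(3,1)\simeq$ P$_7$. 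This yields the second assertion of the proposition.

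To identify the GKO algebras of the statement, I would first prune Table \ref{table3} using Lemma \ref{one}. In the definite case a distribution can never coincide with its perpendicular, so whenever $V$ admits an invariant distribution $\mathcal{D}$ it also admits the distinct $\mathcal{D}^\perp$; thus $V$ is primitive or multiply imprimitive, and every one-imprimitive GKO algebra is immediately excluded. For each remaining candidate I would then either exhibit a conformally invariant definite metric --- confirming membership in the list --- or rule out the existence of such a metric by solving, with a general ansatz $g=g_{ij}(x,y)\,{\rm d}x^i\otimes{\rm d}x^j$, the PDE system $\mathcal{L}_Xg=f_Xg$ for all $X$ in a basis of $V$ and observing that no positive-definite solution exists.

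The main obstacle is this last case-by-case verification: producing an explicit invariant definite metric for each of I$_1$, P$_1$, P$_2$, P$_3$, P$_4$, P$_7$, I$_8^{\alpha=1}$, I$_{14}^{r=1}$, and ruling out every other primitive or multiply imprimitive GKO algebra. The supporting fact that every finite-dimensional Lie subalgebra of holomorphic vector fields on $\mathbb{C}$ is contained in the M\"obius algebra can be obtained from the identity $[z^n\partial_z,z^m\partial_z]=(m-n)z^{n+m-1}\partial_z$, since iterating brackets among monomials $z^n\partial_z$ with $n\geq 3$ produces vector fields of arbitrarily high degree and destroys finite-dimensionality.
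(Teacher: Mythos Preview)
The paper does not supply its own proof of this proposition: it presents Propositions~\ref{ConLie} and~\ref{ConLie2} (together with the two preceding lemmas) as immediate extensions of Propositions~7.4 and~7.5 of \cite{GL16}, which treat only the flat metrics $g_E$ and $g_H$, to arbitrary metrics via the conformal flatness of every metric on $\mathbb{R}^2$ (Theorem~\ref{conf}). Your reduction step is exactly this contribution of the present paper; the rest of your outline---the invariant-distribution argument excluding one-imprimitive classes (Lemma~\ref{one}), the embedding of any finite-dimensional subalgebra of $\mathfrak{conf}(2,0)$ into the M\"obius algebra $\mathrm{P}_7$, and the residual case-by-case inspection of the GKO table---is precisely the content delegated to \cite{GL16}. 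So your approach is aligned with what the paper does, only spelled out in more detail.

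One technical remark: your justification that every finite-dimensional real Lie subalgebra of $\mathfrak{conf}(2,0)$ sits inside the M\"obius algebra is sketched only for \emph{polynomial} holomorphic vector fields via the bracket $[z^n\partial_z,z^m\partial_z]=(m-n)z^{n+m-1}\partial_z$. A general conformal vector field corresponds to an arbitrary holomorphic $f(z)\partial_z$, not a polynomial, so the monomial argument alone does not close the gap. The correct statement is Lie's classical theorem that a finite-dimensional Lie algebra of analytic vector fields on a one-dimensional (real or complex) manifold has dimension at most three and is locally conjugate to a subalgebra of $\langle\partial_z,z\partial_z,z^2\partial_z\rangle$; one proves it by examining orders of vanishing at a common zero (or by rectifying a nonvanishing element). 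You should cite or prove this, rather than rely on the polynomial computation.
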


\begin{proposition}\label{ConLie2}
	The Lie algebras ${\rm I}_1-{\rm I}_4,{\rm I}_6,{\rm I}^{\alpha=1}_8,{\rm I}_9-{\rm I}_{11}, {\rm I}_{14B}, {\rm I}_{15B}$ are the Vessiot--Guldberg Lie algebras of conformal vector fields relative to an indefinite metric on $\mathbb{R}^2$. They are, up to a diffeomorphism, the Lie subalgebras of  {\rm I}$_{11}$. 
\end{proposition}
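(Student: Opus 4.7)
The plan is to argue that I$_{11}$ is, up to local diffeomorphism, the maximal finite-dimensional Lie algebra of conformal vector fields relative to an indefinite metric on $\mathbb{R}^2$, and then to match the subalgebras of I$_{11}$ against the GKO classification.

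First, I would normalise the metric: since by Theorem \ref{conf} every indefinite metric on $\mathbb{R}^2$ is conformally flat and flat indefinite metrics are locally diffeomorphic to $g_H$, I may assume from the outset that $V$ is a Vessiot--Guldberg Lie algebra of conformal vector fields relative to $g_H$. By the calculation carried out in Section 3, every element of $V$ has the form $X=X^x(x)\partial_x+X^y(y)\partial_y$, so $V$ embeds as a Lie subalgebra of $\mathfrak{X}(\mathbb{R})\oplus\mathfrak{X}(\mathbb{R})$. Alternatively, Lemma \ref{one}(3) expresses the same fact in terms of the commuting generators $Y_1,Y_2$ of the two canonical invariant distributions of $V$.

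Next I would upgrade this to an embedding into I$_{11}$. The two projections $\pi_1,\pi_2$ of $V$ onto the factors of $\mathfrak{X}(\mathbb{R})\oplus\mathfrak{X}(\mathbb{R})$ are finite-dimensional Lie subalgebras of $\mathfrak{X}(\mathbb{R})$. By the classical classification of finite-dimensional Lie algebras of vector fields on the line, each $\pi_i(V)$ is, up to a local diffeomorphism of $\mathbb{R}$, a Lie subalgebra of $\mathfrak{sl}(2,\mathbb{R})=\langle\partial,\,t\partial,\,t^2\partial\rangle$. Applying the corresponding product diffeomorphism of $\mathbb{R}^2$ (which preserves the conformal class of $g_H$), $V$ is identified with a Lie subalgebra of $\mathfrak{sl}(2,\mathbb{R})\oplus\mathfrak{sl}(2,\mathbb{R})\simeq\mathfrak{so}(2,2)\simeq{\rm I}_{11}$. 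This establishes the ``only if'' direction and in particular proves the final sentence of the proposition.

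For the converse, I would exhibit each algebra on the list as a subalgebra of I$_{11}$. The GKO normal forms I$_1$--I$_4$, I$_6$, I$_8^{\alpha=1}$, I$_9$--I$_{11}$, I$_{14B}$, I$_{15B}$ are all written in separated-variables coordinates of the form $f(x)\partial_x+g(y)\partial_y$ (with the I$_{nB}$ variants being precisely the realisations belonging to the ``hyperbolic'' strand), so each is automatically conformal with respect to $g_H$ and sits inside $\langle\partial_x,x\partial_x,x^2\partial_x\rangle\oplus\langle\partial_y,y\partial_y,y^2\partial_y\rangle={\rm I}_{11}$. Conversely, the remaining GKO algebras either are primitive (and hence excluded by Lemma \ref{one}(2), which forces any conformal V-G algebra to be imprimitive with two invariant distributions), one-imprimitive with a nilpotent invariant distribution incompatible with the decomposition of Lemma \ref{one}(3), or contain vector fields that mix the two factors in a way that cannot be undone by a diffeomorphism preserving the invariant distributions. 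The main obstacle is precisely this case-by-case bookkeeping: running through each imprimitive GKO entry and either conjugating it into $\mathfrak{sl}(2,\mathbb{R})\oplus\mathfrak{sl}(2,\mathbb{R})$ or producing an obstruction (typically, the second projection fails to be a Lie subalgebra of $\mathfrak{X}(\mathbb{R})$, or the two commuting invariant vector fields from Lemma \ref{one} fail to exist inside the ambient algebra). Once this check is carried out entry by entry, exactly the listed algebras survive, completing the proof.
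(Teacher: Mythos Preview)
The paper does not supply its own proof of this proposition; it is stated (together with Lemma~\ref{con}, Lemma~\ref{one}, and Proposition~\ref{ConLie}) as an ``obvious'' extension of Proposition~7.5 in \cite{GL16} from flat to general metrics via Theorem~\ref{conf}. So there is no in-text argument to compare against; your sketch is in fact a reconstruction of what the proof in \cite{GL16} must look like.

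Your structural argument is correct and is the natural one. Reducing to $g_H$ by conformal flatness, using the separated form $X^x(x)\partial_x+X^y(y)\partial_y$ from Section~3, projecting onto the two factors, and then invoking Lie's classification on the line to land inside $\mathfrak{sl}(2,\mathbb{R})\oplus\mathfrak{sl}(2,\mathbb{R})={\rm I}_{11}$ is exactly the right mechanism, and it cleanly proves the second sentence of the proposition. The converse direction (each listed algebra is visibly a subalgebra of ${\rm I}_{11}$ in its GKO normal form) is also fine.

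Two small points on the exclusion step. First, your use of Lemma~\ref{one}(2) can be sharpened: that lemma is stated for the \emph{full} conformal Lie algebra, but its two invariant rank-one distributions are automatically invariant for any Lie subalgebra $V$, so every conformal $V$ for an indefinite metric is multiply imprimitive. This immediately rules out all primitive ${\rm P}_i$ and all one-imprimitive ${\rm I}_j$ without any further ``nilpotent distribution'' analysis. Second, the genuinely delicate part is the handful of multiply imprimitive GKO classes not on the list (${\rm I}_8^{\alpha\neq1}$, ${\rm I}_{14A}$, ${\rm I}_{15A}$): here your phrase ``mix the two factors in a way that cannot be undone'' is where the real work hides, and it is not obvious that the obstruction you describe actually applies in each case. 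For instance, ${\rm I}_8^{\alpha\neq1}=\langle\partial_x,\partial_y,x\partial_x+\alpha y\partial_y\rangle$ already sits inside ${\rm I}_{11}$ in its GKO coordinates, so the bookkeeping here must be done with reference to the precise statement one is trying to match rather than by a generic obstruction argument. This is exactly the place where one has to consult \cite{GL16} for the entry-by-entry verification, as the present paper does.
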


Let us now classify Lie algebras of Killing vector fields on  $\mathbb{R}^2$ relative to a metric. Our findings are summarised in Table \ref{table3}.

\begin{theorem} \label{kill1}Let $X_1,X_2,Y$ be Killing vector fields relative to a metric  $g$ on $\mathbb{R}^2$ such that $X_1\wedge X_2\neq 0$ and $[X_1,X_2]=0$. Then:
	\begin{itemize}
		\item[1)] The functions $g(X_i,X_j),\ i,j=1,2$, are constant,
		\item[2)] If $Y\wedge X_i=0$ for a fixed $i\in\{1,2\}$, then  $Y$ and the $X_j$ are orthogonal or  commute. 
	\end{itemize}
\end{theorem}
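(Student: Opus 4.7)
For part (1), the plan is to exploit the tensorial identity $(\mathcal{L}_{X_k} g)(X_i, X_j) = 0$, coming from the Killing property of $X_k$. Expanding the Lie derivative gives
\[
X_k\bigl(g(X_i, X_j)\bigr) = g([X_k, X_i], X_j) + g(X_i, [X_k, X_j]),
\]
which vanishes because every pairwise Lie bracket $[X_k, X_l]$ with $k, l \in \{1,2\}$ is zero by assumption. Hence $g(X_i, X_j)$ is annihilated by both $X_1$ and $X_2$. The hypothesis $X_1 \wedge X_2 \neq 0$ makes $\{X_1, X_2\}$ a local frame on the open set of generic points, so the function is locally constant and therefore constant on each connected component.

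For part (2), I take WLOG $i = 1$ and work on the open set where $X_1 \wedge X_2 \neq 0$; there $X_1$ is nonvanishing and $Y \wedge X_1 = 0$ means $Y = f X_1$ for a unique $f \in C^{\infty}$. The plan is to test the Killing identity $\mathcal{L}_Y g = 0$ against the frame pairs $(X_p, X_q)$. By part (1) the terms $Y\,g(X_p, X_q)$ vanish, while $[X_1, X_p] = 0$ gives $[Y, X_p] = -(X_p f) X_1$. Writing $a := g(X_1,X_1)$, $b := g(X_1, X_2)$, $c := g(X_2,X_2)$ (constants by part (1)), the resulting equations are
\[
a\,(X_1 f) = 0, \qquad a\,(X_2 f) + b\,(X_1 f) = 0, \qquad b\,(X_2 f) = 0.
\]
Non-degeneracy of $g$ read in the frame $\{X_1,X_2\}$ forces $ac - b^{2} \neq 0$. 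A case split then concludes: if $a \neq 0$, the first equation yields $X_1 f = 0$ and the second gives $X_2 f = 0$; if $a = 0$, then $b \neq 0$, and the last two equations produce the same conclusion. Hence $f$ is constant, so $Y = \lambda X_1$ with $\lambda \in \mathbb{R}$, and in particular $[Y, X_j] = 0$ for $j = 1, 2$. The "orthogonal" alternative in the statement is absorbed by the degenerate subcase $\lambda = 0$, in which $Y \equiv 0$ is orthogonal to everything.

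The step I expect to be delicate is the null case $a = g(X_1, X_1) = 0$, which genuinely arises for indefinite metrics and makes the first equation vacuous. The clean resolution is to extract $b \neq 0$ from the non-degeneracy condition $ac - b^{2} \neq 0$, which reactivates the remaining two equations and restores the conclusion. Everything else is a direct bookkeeping exercise with the Killing identity, so the geometric content of the theorem is concentrated in this non-degeneracy argument.
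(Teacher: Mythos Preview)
Your proof is correct, and for part~(1) it is identical to the paper's. For part~(2) the underlying method is the same---write $Y=fX_1$ and expand the Killing identity $\mathcal{L}_Y g=0$ against frame pairs---but the paper tests only the diagonal pairs $(X_j,X_j)$, obtaining $(X_j f)\,g(X_1,X_j)=0$ and hence the stated disjunction ``$Y\perp X_j$ or $[Y,X_j]=0$'' for each $j$ separately. You additionally test the off-diagonal pair $(X_1,X_2)$ and then invoke the non-degeneracy of $g$ (i.e.\ $ac-b^2\neq 0$) to force $X_1f=X_2f=0$ in every case, so that $f$ is constant and $[Y,X_j]=0$ always. Thus your argument actually establishes a sharper fact than the paper's statement: the ``orthogonal'' alternative never occurs except trivially when $Y=0$. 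The trade-off is that the paper's version is a one-line consequence of the diagonal equations and does not need the case split on $a=0$ versus $a\neq 0$, whereas yours pays that small price to extract the stronger conclusion.
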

\begin{proof} Let us prove $1)$. Since $X$ is a Killing vector field for $g$ and $[X_1,X_2]=0$ by assumption, it follows that
	$$
	\mathcal{L}_{X_i}g(X_j,X_k)=(\mathcal{L}_{X_i}g)(X_j,X_k)+g(\mathcal{L}_{X_i}X_j,X_k)+g(X_j,\mathcal{L}_{X_i}X_k)=0,\quad i,j,k=1,2.
	$$
	As $X_1\wedge X_2\neq 0$ and $X_1,X_2\in \mathfrak{X}(\mathbb{R}^2)$, the tangent vectors $X_1(\xi),X_2(\xi)$ span $T_\xi \mathbb{R}^2$ for every $\xi\in \mathbb{R}^2$ and $\mathcal{L}_{X_i}g(X_j,X_k)=0$ for every $i,j,k=1,2$. In consequence,  ${\rm d}[g(X_j,X_k)]=0$  and $g(X_j,X_k)$ is a constant for $j,k=1,2$.
	
	Let us now prove 2) by assuming $Y\wedge X_1=0$. Hence, $Y=fX_1$ for a function
	$f\in C^\infty(\mathbb{R}^2)$. Using the part a) of the present theorem, we obtain that $0=\mathcal{L}_{fX_1}g(X_i,X_i)$. Since $Y$ is a Killing vector field for $g$ and $[X_1,X_2]=0$, it additionally follows that 
	$$
	0=\mathcal{L}_{fX_1}g(X_j,X_j)=	g(\mathcal{L}_{fX_1}X_j,X_j)+g(X_i,\mathcal{L}_{fX_1}X_j)=-2(X_jf)g(X_1,X_j).
	$$
	Therefore,   $X_j$ and $Y$ are orthogonal, i.e. $g(X_j,Y)=0$, or $[X_j,Y]=(X_jf)X_1=0$.
\end{proof}

\begin{lemma}\label{KKK}
	If $V$ is a Lie algebra of Killing vector fields relative to a metric $g$ on $\mathbb{R}^2$ and there exist $X_1, X_2\in V$ such that $[X_1,X_2]=0$ and $X_1\wedge X_2\neq 0$, then $g=c_{ij}\theta^i\otimes \theta^j$, where the $c_{ij}$ are constant and $\theta^1,\theta^2$ are dual one-forms to $X_1,X_2$, i.e.
	$\theta^i(X_j)=\delta_j^i$, $i,j=1,2$.
\end{lemma}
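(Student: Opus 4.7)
The plan is to exploit Theorem \ref{kill1} directly together with the fact that $X_1,X_2$ furnish a global frame on $\mathbb{R}^2$. The hypothesis $X_1\wedge X_2\neq 0$, interpreted as in the proof of Theorem \ref{kill1}, guarantees that $X_1(\xi),X_2(\xi)$ span $T_\xi\mathbb{R}^2$ at every $\xi$, so the dual one-forms $\theta^1,\theta^2$ defined by $\theta^i(X_j)=\delta^i_j$ are everywhere well-defined and form a coframe.

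Once the frame is available, the proof is a one-line expansion. First I would recall the standard fact that any $(0,2)$-tensor field $T$ on $\mathbb{R}^2$ can be written in a coframe as $T=T(X_i,X_j)\,\theta^i\otimes\theta^j$; applying this to $g$ gives
\[
g \;=\; g(X_i,X_j)\,\theta^i\otimes\theta^j.
\]
Next I would invoke part 1) of Theorem \ref{kill1}: under the standing assumptions on $X_1,X_2$, each of the four scalar functions $g(X_i,X_j)$ is constant on $\mathbb{R}^2$. Setting $c_{ij}:=g(X_i,X_j)$ therefore yields $g=c_{ij}\theta^i\otimes\theta^j$ with $c_{ij}\in\mathbb{R}$, as required. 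Symmetry of $c_{ij}$ follows from symmetry of $g$, and nondegeneracy of the matrix $(c_{ij})$ corresponds to nondegeneracy of $g$ at any point, but neither of these is needed for the statement.

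There is essentially no obstacle: the argument is a direct corollary of Theorem \ref{kill1}. The only mildly delicate point is the interpretation of the hypothesis $X_1\wedge X_2\neq 0$ as nonvanishing at every point (so that $\theta^1,\theta^2$ exist globally and the coframe expansion of $g$ is literal); this matches the convention already used in the proof of Theorem \ref{kill1}. In particular no differential identity or Lie-derivative computation beyond what has already been done in that theorem is needed.
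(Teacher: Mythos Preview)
Your argument is correct. Both approaches begin by expanding $g$ in the coframe $\theta^1,\theta^2$, but they diverge in how they show the coefficients are constant. The paper does not invoke Theorem~\ref{kill1}; instead it first proves the auxiliary identity $\mathcal{L}_{X_i}\theta^j=0$ (using $[X_i,X_j]=0$ and duality) and then applies the Leibniz rule to $\mathcal{L}_{X_i}g=0$ to force $\mathcal{L}_{X_i}g_{jk}=0$. You bypass this by observing at the outset that the coframe coefficients are literally the scalars $g(X_i,X_j)$, whose constancy is exactly part~1) of Theorem~\ref{kill1}. Your route is more economical and makes transparent that Lemma~\ref{KKK} is a direct repackaging of Theorem~\ref{kill1}; the paper's route is self-contained and, as a byproduct, records the useful fact that the dual coframe is itself $V$-invariant.
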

\begin{proof} Assumption $X_1\wedge X_2$ implies that $\theta^1\wedge \theta^2\neq 0$ and any metric $g$ on $\mathbb{R}^2$ can be brought into the form 
	$$g=g_{11}\theta^1\otimes \theta^1+g_{12}(\theta^1\otimes \theta^2 + \theta^2\otimes \theta^1)+g_{22}\theta^2\otimes \theta^2,
	$$
	for certain functions $g_{ij}\in C^\infty(\mathbb{R}^2)$.
 Since $[X_i,X_j]=0$ by assumption, it turns out that
	$$
	(\mathcal{L}_{X_i}\theta^j)(X_k)=X_i[\theta^j(X_k)]-\theta^j([X_i,X_k])=0,\qquad i,j=1,2.
	$$ 
	Since $X_1\wedge X_2\neq 0$ also by assumption, the Lie derivative $\mathcal{L}_{X_i}\theta^j$ vanishes on an arbitrary vector field, i.e. $\mathcal{L}_{X_i}\theta^j=0$ for $i,j=1,2$.
	If  $X_i$ is a Killing vector field relative to $g$, then $\mathcal{L}_{X_i}g=0,\  i=1,2$. Due to this reason and as 
	$\mathcal{L}_{X_i}\theta^j=0$ for $i,j=1,2$, it follows that
	\begin{equation*}
	\mathcal{L}_{X_i}g=(\mathcal{L}_{X_i}g_{11})\theta^1\otimes \theta^1+(\mathcal{L}_{X_i}g_{12})(\theta^1\otimes \theta^2 + \theta^2\otimes \theta^1)+(\mathcal{L}_{X_i}g_{22})\theta^2\otimes \theta^2 =0,\quad i=1,2. \label{eq:kill}
	\end{equation*}
	The last equality holds if and only if $\mathcal{L}_{X_i}g_{11}=\mathcal{L}_{X_i}g_{12}=\mathcal{L}_{X_i}g_{22}=0$ for $i=1,2$. Since $X_1\wedge X_2\neq 0$, this means that the $g_{ij}$ are constant for $i,j=1,2$ .
\end{proof}

Among the Vessiot--Guldberg Lie algebras on the plane (see Table \ref{table3}), we aim to classify those Lie algebras $V$ consisting of Killing vector fields relative to a metric $g$,
namely, $\mathcal{L}_{X}g=0,\ \forall X\in V$. The following proposition is a consequence of Lemma \ref{KKK}.

\begin{proposition}\label{wn23p}
	The Lie algebra {\rm I}$_{14B}$ consists of Killing vector fields only relative to Euclidean and hyperbolic metrics on $\mathbb{R}^2$.
\end{proposition}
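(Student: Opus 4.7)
My plan is to deduce the proposition as a direct application of Lemma \ref{KKK} combined with the explicit form of I$_{14B}$ in the GKO classification (Table \ref{table3}). The key observation is that I$_{14B}$ contains a pair of linearly independent commuting vector fields—namely the translations $\partial_x,\partial_y$—so if I$_{14B}$ consists of Killing vector fields for some metric $g$, Lemma \ref{KKK} immediately forces $g$ to have \emph{constant} coefficients in the coordinate frame $(\partial_x,\partial_y)$. Hence one can write $g=c_{11}{\rm d}x\otimes{\rm d}x+c_{12}({\rm d}x\otimes{\rm d}y+{\rm d}y\otimes{\rm d}x)+c_{22}{\rm d}y\otimes{\rm d}y$ with $c_{ij}\in\mathbb{R}$ and $\det(c_{ij})\neq 0$ (by non-degeneracy).

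Next, I would impose $\mathcal{L}_X g=0$ for the remaining generators of I$_{14B}$ (those depending non-trivially on $x,y$). Since $g$ already has constant coefficients, the Lie derivative reduces to an expression polynomial in $x,y$ whose vanishing splits into a small linear system in $c_{11},c_{12},c_{22}$. This system will eliminate all but one or two of the $c_{ij}$: I expect the outcome to be either $c_{12}=0$ with $c_{11}=\pm c_{22}$ (yielding multiples of $g_E$), or $c_{11}=c_{22}=0$ (yielding multiples of $g_H$), according to how the non-translation generators of I$_{14B}$ act on the metric tensor.

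Finally, I would close the argument by verifying the converse: that every element of I$_{14B}$ is indeed Killing for both $g_E$ and $g_H$. For $g_E$ this is immediate from the well-known description of I$_{14B}$ as a subalgebra of the Euclidean isometry algebra, and for $g_H$ it follows by a direct check of the Lie derivatives on the hyperbolic coordinate form. The two verifications together with the preceding constraint show that Euclidean and hyperbolic metrics are precisely the metrics making I$_{14B}$ Killing.

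The main obstacle will be handling the non-translation generators cleanly: one must keep track of the signs so as to identify which constant combinations $(c_{11},c_{12},c_{22})$ survive, and in particular to demonstrate that \emph{both} $g_E$ and $g_H$ (and no other conformal class) arise. The "B" variant of I$_{14}$ is precisely the one whose structure constants are compatible with both signatures, and so the dichotomy between $g_E$ and $g_H$ falls out of the algebraic system rather than being put in by hand; making this transparent is the technical core of the proof.
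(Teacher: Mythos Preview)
Your proposal contains a genuine misunderstanding of what ${\rm I}_{14B}$ is. From Table~\ref{table3}, ${\rm I}_{14B}=\langle\partial_x,\partial_y\rangle$ consists \emph{only} of the two translations; there are no ``remaining generators depending non-trivially on $x,y$''. Hence your second step---imposing $\mathcal{L}_Xg=0$ for additional generators to cut the constants $(c_{11},c_{12},c_{22})$ down to the specific forms $c_{12}=0,\ c_{11}=\pm c_{22}$ or $c_{11}=c_{22}=0$---is vacuous. No such constraints arise, and your expected ``small linear system'' does not exist.

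The correct argument, as in the paper, stops after your first step. Lemma~\ref{KKK} applied to $X_1=\partial_x$, $X_2=\partial_y$ forces $g=c_{xx}\,{\rm d}x\otimes{\rm d}x+c_{xy}({\rm d}x\otimes{\rm d}y+{\rm d}y\otimes{\rm d}x)+c_{yy}\,{\rm d}y\otimes{\rm d}y$ with constant $c_{ij}$; conversely every such non-degenerate $g$ trivially satisfies $\mathcal{L}_{\partial_x}g=\mathcal{L}_{\partial_y}g=0$. The phrase ``Euclidean and hyperbolic metrics'' in the statement refers not to the specific models $g_E$ and $g_H$ but to \emph{flat} metrics of either signature: any constant-coefficient metric is flat, and according to whether $c_{xx}c_{yy}-c_{xy}^2$ is positive or negative it is definite (Euclidean) or indefinite (hyperbolic). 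That dichotomy is the whole content of the proposition; it is not obtained by eliminating coefficients via extra Killing conditions.
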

\begin{proof} If the vector fields of I$_{14B}:=\langle \partial_x,\partial_y\rangle$ are Killing vector fields relative to a metric $g$, then Lemma \ref{KKK} ensures that $g=c_{xx}{\rm d}x\otimes {\rm d}x+c_{xy}({\rm d}x\otimes {\rm d}y+{\rm d}y\otimes {\rm d}x)+c_{yy}{\rm d}y\otimes {\rm d} y$ for certain constants $c_{xx},c_{xy},c_{yy}$. Then, if I$_{14B}$ is a Lie algebra of Killing vector fields relative to a metric, then the metric must be flat. Additionally, the previous form of $g$ ensures that $\mathcal{L}_Yg=0$ for any $Y\in {\rm I}_{14B}$ and constants $c_{xx},c_{xy},c_{yy}$. It is enough then to choose $c_{xx},c_{xy},c_{yy}$  in such a way that $g$ is non-degenerate to see that I$_{14B}$ is a Lie algebra of Killing vector fields in respect of Euclidean and hyperbolic metrics. 
\end{proof}

\begin{proposition}\label{wn23}
	The Lie algebras on the plane given by 
	${\rm P}^{\alpha\neq 0}_1, {\rm P}_4-{\rm P}_8, {\rm I}_6-{\rm I}_{11}, {\rm I}_{16}-{\rm I}_{20}
	$
	do not consist of Killing vector fields relative to any metric on $\mathbb{R}^2$.		
\end{proposition}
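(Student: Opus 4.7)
The plan is to dispose of every Lie algebra on the list by the same two-step strategy, powered by Lemma \ref{KKK}. For each such $V$ one first identifies a pair $X_1,X_2\in V$ of linearly independent commuting vector fields. Lemma \ref{KKK} then forces any metric $g$ with $\mathcal{L}_{X_i}g=0$, $i=1,2$, to take the constant-coefficient form $g = c_{ij}\theta^i\otimes\theta^j$ in the dual frame to $X_1,X_2$. Next, one picks a third generator $X_3\in V$ and computes $\mathcal{L}_{X_3}g$ in the same frame, reading off the linear conditions on the constants $c_{ij}$ imposed by $\mathcal{L}_{X_3}g=0$. The point is to check that in each listed case the resulting linear system forces $\det(c_{ij})=0$, so that no non-degenerate $g$ survives, contradicting the hypothesis that $g$ is a pseudo-Riemannian metric.

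As an illustration of the mechanism, take $V = {\rm P}_1^{\alpha\neq 0}$ in the typical presentation $\langle \partial_x,\partial_y,x\partial_x+\alpha y\partial_y\rangle$. With $X_1 = \partial_x$, $X_2 = \partial_y$, Lemma \ref{KKK} gives $g = c_{11}\,{\rm d}x\otimes {\rm d}x + c_{12}({\rm d}x\otimes {\rm d}y+{\rm d}y\otimes {\rm d}x) + c_{22}\,{\rm d}y\otimes {\rm d}y$ with $c_{ij}\in\mathbb{R}$, and $\mathcal{L}_{x\partial_x+\alpha y\partial_y}g=0$ yields $2c_{11}=0$, $(1+\alpha)c_{12}=0$, $2\alpha c_{22}=0$. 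For $\alpha\neq 0$ this forces $c_{11}=c_{22}=0$ (and $c_{12}=0$ unless $\alpha=-1$), which in every subcase leaves $g$ degenerate. The same template then handles all of ${\rm P}_4$--${\rm P}_8$ and ${\rm I}_6$--${\rm I}_{11}$: in each one locates an abelian pair inside $V$ (typically a conjugate of the translations $\partial_x,\partial_y$) and then lets a remaining scaling or rotational generator kill off the surviving $c_{ij}$, using Theorem \ref{kill1}(1) as a double-check that the $c_{ij}$ must indeed be constant.

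The main obstacle is the subfamily ${\rm I}_{16}$--${\rm I}_{20}$, which in the GKO list are imprimitive Lie algebras whose elements all remain tangent to a common one-dimensional foliation; there a linearly independent commuting pair need not be available inside $V$, so Lemma \ref{KKK} does not apply verbatim. For these I would work in coordinates adapted to the invariant foliation, say with the shared direction $\langle\partial_y\rangle\subset V$: invariance of $g$ under $\partial_y$ forces its components to depend only on $x$, and the Killing condition for a further generator of the form $\xi(x)\partial_x+\eta(x,y)\partial_y$ belonging to $V$ (polynomial in $x$ of degree up to two, depending on which ${\rm I}_j$ is treated) yields an over-determined system of ODEs in $x$ on the components of $g$ whose only solutions make $g$ degenerate. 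The delicate step is to match the explicit list of generators of each ${\rm I}_j$, $j=16,\ldots,20$, against the remaining freedom in $g$ and to verify that the polynomial identities forced by the higher-order generators are incompatible with a non-zero determinant; once this is carried out the contradiction is immediate and the proposition follows.
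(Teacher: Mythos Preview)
Your overall strategy---reduce to a constant-coefficient metric via Lemma~\ref{KKK} and then let a further generator force degeneracy---is sound and is exactly what the paper does for the residual cases. However, your proposal contains two concrete errors and misses the paper's main shortcut.

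First, your worked illustration for ${\rm P}_1^{\alpha\neq 0}$ uses the wrong Lie algebra. The basis $\langle\partial_x,\partial_y,x\partial_x+\alpha y\partial_y\rangle$ is not ${\rm P}_1$ at all; in the GKO table (Table~\ref{table3}) this is ${\rm I}_8$. The actual ${\rm P}_1$ has third generator $X_3=\alpha(x\partial_x+y\partial_y)+y\partial_x-x\partial_y$, and the paper's computation of $\mathcal{L}_{X_3}g$ leads to $\alpha^2(2+\alpha^2)c_{xx}=0$, a genuinely different linear system from the one you wrote down. Your conclusion happens to be correct, but the calculation is for the wrong object.

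Second, your ``main obstacle'' for ${\rm I}_{16}$--${\rm I}_{20}$ is illusory: every one of these algebras contains $\partial_x$ and $\partial_y$ (check Table~\ref{table3}), so Lemma~\ref{KKK} applies directly with $\theta^1={\rm d}x$, $\theta^2={\rm d}y$. In fact, each of them also contains $x\partial_y$, and $\mathcal{L}_{x\partial_y}g=0$ immediately forces $c_{12}=c_{22}=0$, making $g$ degenerate. So your proposed ODE detour is unnecessary.

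The paper's route is shorter still: rather than run the Lemma~\ref{KKK} argument on every algebra, it first observes that a Lie algebra of Killing vector fields is \emph{a fortiori} a Lie algebra of conformal vector fields, and then invokes Propositions~\ref{ConLie} and~\ref{ConLie2} to discard at once all algebras that are not conformal for any metric---this eliminates ${\rm P}_5,{\rm P}_6,{\rm P}_8,{\rm I}_7,{\rm I}_8^{\alpha\neq 1}$ and the whole block ${\rm I}_{16}$--${\rm I}_{20}$. Only the handful ${\rm P}_1^{\alpha\neq 0},{\rm P}_4,{\rm P}_7,{\rm I}_6,{\rm I}_8^{\alpha=1},{\rm I}_9,{\rm I}_{11}$ survive, and for these the paper applies Lemma~\ref{KKK} with $X_1=\partial_x,X_2=\partial_y$ exactly as you suggest (handling ${\rm P}_7$ by the inclusion ${\rm P}_4\subset{\rm P}_7$). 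Your approach is more self-contained but longer; the paper's buys brevity at the cost of relying on the prior conformal classification.
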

\begin{proof} Propositions
	\ref{ConLie} and \ref{ConLie2} ensure that P$_5$, P$_6$, P$_8$, I$_7$, I$_8^{\alpha\neq 1}$, I$_{10}$ and I$_{16}$--I$_{20}$ are not Lie algebras of conformal vector fields. Hence, they cannot be Lie algebras of Killing vector fields. Let us then focus on the remaining Lie algebras stated in this proposition
	\begin{equation}\label{Rem}
	{\rm P}_1^{\alpha\neq 0},{\rm P}_4,{\rm P}_7,{\rm I}_{6},{\rm I}_8^{\alpha= 1},{\rm I}_9,{\rm I}_{11}.
	\end{equation}
	
	Let us proceed by reduction to the absurd, and we assume that previous Lie algebras consist of Killing vector fields relative to a metric on $\mathbb{R}^2$.	Apart from I$_7$, all previous Lie algebras satisfy the conditions given in  Lemma \ref{KKK} for 
	$X_1=\partial_x$ and $X_2=\partial_y$. The dual one-forms to $X_1,X_2$ read $\theta_1={\rm d}x$ and $\theta_2={\rm d}y$. Hence,
	$$
	g=c_{xx}{\rm d}x\otimes {\rm d}x+c_{xy}({\rm d}x\otimes {\rm d}y+{\rm d}y\otimes {\rm d}x)+c_{yy}{\rm d}y\otimes {\rm d}y
	$$
	for certain constants $c_{xx},c_{xy},c_{yy}$. 
	
	$\bullet$ {\it Lie algebra {\rm P}$^{\alpha\neq 0}_1$:}
	Let us take $X_3:=\alpha(x\partial_x + y\partial_y)  +  y\partial_x - x\partial_y\in$ P$_1$ where $\ \alpha> 0$. Since  $X_3$ is a Killing vector field relative to $g$, then 
	$$
	\mathcal{L}_{X_3}g=2(\alpha c_{xx}-c_{xy}){\rm d}x\otimes {\rm d}x+(c_{xx}+\alpha c_{xy}-c_{yy})({\rm d}x\otimes {\rm d}y+{\rm d}y\otimes {\rm d}x)+2(c_{xy}+\alpha c_{yy}){\rm d}y\otimes {\rm d}y=0 
	$$
	and therefore condition $\mathcal{L}_{X_3}g=0$ amounts to
	$$
	2(\alpha c_{xx}-c_{xy})=(c_{xx}+\alpha c_{xy}-c_{yy})=2(c_{xy}+\alpha c_{yy})=0 \Rightarrow \alpha^2c_{xx}(2+\alpha^2)=0.
	$$
	Since $\alpha\neq 0$ by assumption, $c_{xx}=c_{xy}=c_{yy}=0$  and $g=0$. This is a contradiction and P$_1^{\alpha\neq 0}$ does not consists of Killing vector fields for any $g$ on $\mathbb{R}^2$.

	$ \bullet$ {\it Lie algebra {\rm P}$_4$}:
	In this case we choose $X_3=x\partial_x+y\partial_y\in $P$_4$. Then,
	\begin{equation}\label{eq5}
	\mathcal{L}_{x\partial_x+y\partial_y}(c_{xx}{\rm d}x\otimes {\rm d}x+c_{xy}({\rm d}x\otimes {\rm d}y+{\rm d}y\otimes {\rm d}x)+c_{yy}{\rm d}y\otimes {\rm d}y)=2 g.
	\end{equation}
	If  $X_3$ is a Killing vector field, then $\mathcal{L}_{X_3}g=0$ and $g=0$. This is a contradiction and hence P$_4$ does not consist of Killing vector fields relative to any metric on the plane.

	$\bullet$ {\it Lie algebras} I$_6$, I$_9$, I$_{10}$: All these Lie algebras contain the vector field $X_3=x\partial_x$. As $X_3$ is a Killing vector field by assumption, $\mathcal{L}_{X_3}g=0$. From  (\ref{eq5}) it follows that $g=0$, which is a contradiction. Hence, none of the previous Lie algebras consists of Killing vector fields relative to any $g$ on $\mathbb{R}^2$.
	
	$\bullet$ {\it Lie algebra} I$_8$: Since  $X_3:=x\partial_x+y\partial_y\in$ I$_8^{\alpha=1}$ must be a Killing vector field relative to $g$, then (\ref{eq5}) shows that $g=0$. 
	
	$\bullet$ {\it Lie algebras {\rm P}$_6$ and {\rm P}$_7$}: Since P$_4$ does not consist of Killing vector fields relative to any metric on $\mathbb{R}^2$ and P$_4$ is a Lie subalgebra of P$_7$,P$_6$, the Lie algebras P$_6$ and P$_7$ cannot consist of Killing vector fields for any metric on $\mathbb{R}^2$ neither.
\end{proof}


\begin{corollary}\label{wn22}
	If  $V$ is a  Lie algebra of vector fields on $\mathbb{R}^2$ containing linearly independent
	$X_1,X_2, X_3$ such that $[X_1,X_2]=[X_2,X_3]=0, [X_1,X_3]\neq 0, X_2\wedge X_3\neq 0, X_1\wedge X_3=0$, then $V$ is not a Lie algebra of Killing vector fields related to any metric.
\end{corollary}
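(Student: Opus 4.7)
My plan is a \emph{reductio ad absurdum}. I would assume $V$ consists of Killing vector fields relative to some metric $g$ on $\mathbb{R}^2$. First I would apply Lemma \ref{KKK} with the commuting pair $X_2, X_3$ (which are Killing and satisfy $X_2\wedge X_3\neq 0$): in the coframe $\theta^2, \theta^3$ dual to $X_2, X_3$ the metric must take the form
$$g = c_{22}\,\theta^2\otimes\theta^2 + c_{23}(\theta^2\otimes\theta^3+\theta^3\otimes\theta^2) + c_{33}\,\theta^3\otimes\theta^3,$$
with $c_{22}, c_{23}, c_{33}$ constants and $c_{22}c_{33}-c_{23}^2\neq 0$ by non-degeneracy.

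Next I would exploit $X_1\wedge X_3=0$: on the dense open set where $X_3\neq 0$ one can write $X_1 = fX_3$ for a smooth function $f$, and linear independence of $X_1, X_3$ in the Lie algebra $V$ forces $f$ to be non-constant. The identity $[X_1,X_3] = [fX_3, X_3] = -(X_3 f)X_3$ then converts the hypothesis $[X_1,X_3]\neq 0$ into the statement that $X_3 f$ does not vanish identically.

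Finally, I would impose $\mathcal{L}_{X_1}g = 0$. Using $\mathcal{L}_{X_3}g = 0$ the Killing condition reduces to $\mathcal{L}_{X_1}g(Y,Z) = (Yf)\,g(X_3,Z) + (Zf)\,g(X_3,Y)$, and evaluating on the pairs $(X_2,X_2), (X_2,X_3), (X_3,X_3)$ yields the system
$$(X_2 f)\,c_{23}=0,\qquad (X_2 f)\,c_{33}+(X_3 f)\,c_{23}=0,\qquad (X_3 f)\,c_{33}=0.$$
A short case split closes the argument: if $c_{23}\neq 0$, the first two equations give $X_2 f = X_3 f = 0$; if instead $c_{23}=0$, non-degeneracy forces $c_{33}\neq 0$ and the third equation gives $X_3 f = 0$ directly. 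Either way $X_3 f\equiv 0$ on the generic open set, contradicting the conclusion of the previous step. The main obstacle I anticipate is that middle step: one has to verify carefully that $f$ is genuinely non-constant on an open set so that the final contradiction is meaningful, and this is where the implicit regularity/analyticity of the vector fields in the GKO classification enters.
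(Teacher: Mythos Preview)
Your argument is correct and follows essentially the same strategy as the paper: a reductio using Lemma~\ref{KKK} on a commuting pair with nonvanishing wedge, writing the third vector field as a function times one of them, and deriving a contradiction between the Killing condition and the non-degeneracy of $g$. The only cosmetic difference is that the paper applies Lemma~\ref{KKK} to the pair $(X_1,X_2)$ (noting that $X_1\wedge X_2\neq 0$ follows from $X_1\wedge X_3=0$ and $X_2\wedge X_3\neq 0$) and writes $X_3=fX_1$, then argues the contrapositive: from $X_1f\neq 0$ it concludes $g(X_1,X_1)=g(X_1,X_2)=0$, forcing $g$ to be degenerate.

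Your worry about regularity/analyticity is unnecessary. The hypothesis $X_2\wedge X_3\neq 0$ (in the paper's convention, nowhere vanishing) guarantees $X_3$ is nonzero everywhere, so $f$ is globally defined; and $[X_1,X_3]=-(X_3f)X_3\neq 0$ means $X_3f$ is not identically zero. Your system then forces $X_3f\equiv 0$ on all of $\mathbb{R}^2$, which is already a direct contradiction in the smooth category---no passage from ``zero on an open set'' to ``zero everywhere'' is required.
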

\begin{proof}
	Let us prove our claim by reduction to contradiction. Since $X_1\wedge X_3=0$, it exists a non-zero function 
	$f\in C^\infty(\mathbb{R}^2)$ such that, $X_3=f(\xi)X_1,\ \forall \xi\in\mathbb{R}^2$. As $X_3$ is a Killing vector field by assumption, it satisfies that $\mathcal{L}_{X_3}g=0$ relative to a metric $g$. Also from assumption $X_1\wedge X_2\neq 0$. Hence, using Lemma \ref{KKK}, we find that 
	\begin{equation*}
	0=\mathcal{L}_{fX_1}[g(X_1,X_1)]=-2(X_1f)g(X_1,X_1),\quad 		0=\mathcal{L}_{fX_1}[g(X_1,X_2)]=-(X_1f)g(X_1,X_2).
	\end{equation*}
	Since $[X_1,X_3]\neq 0$ by assumption, if follows that $X_1f\neq 0$ and $g(X_1,X_1)=g(X_1,X_2)=0$. From this result and as $X_2\wedge X_1\neq 0$, it turns out that $g$ is degenerate. This is a contradiction, which finishes the proof.
\end{proof}
\begin{proposition}\label{KilP} The Lie algebras {\rm P}$^{\alpha=0}_1$, {\rm P}$_{3}$, and {\rm I}$_4$ are Lie algebras of Killing vector fields relative to some metrics on the plane.
\end{proposition}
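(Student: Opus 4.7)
The plan is constructive: for each of the three Lie algebras I would produce an explicit pseudo-Riemannian metric on an open subset of $\mathbb{R}^2$ and check the Killing condition $\mathcal{L}_Xg=0$ on the generators listed in Table \ref{table3}; by linearity of $\mathcal{L}_{(\cdot)}g$ in the first slot, this suffices.

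For $\mathrm{P}_1^{\alpha=0}=\langle\partial_x,\partial_y,y\partial_x-x\partial_y\rangle$ the obvious candidate is the Euclidean metric $g_E=\mathrm{d}x\otimes\mathrm{d}x+\mathrm{d}y\otimes\mathrm{d}y$, and a direct computation confirms $\mathcal{L}_{\partial_x}g_E=\mathcal{L}_{\partial_y}g_E=\mathcal{L}_{y\partial_x-x\partial_y}g_E=0$. For $\mathrm{P}_3$ one reads off its generators from Table \ref{table3} and verifies in the same spirit that the metric associated to that entry (of the type obtained via the Casimir tensor-field construction advertised in the introduction) is invariant; since $\mathrm{P}_3$ also contains a pair of commuting non-parallel generators, Lemma \ref{KKK} pins down the form of the metric up to constants and the remaining Killing equations for the non-translational generator reduce to a finite check.

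The substantive case is $\mathrm{I}_4=\langle X_1,X_2,X_3\rangle$ with $X_1=\partial_x+\partial_y$, $X_2=x\partial_x+y\partial_y$, $X_3=x^2\partial_x+y^2\partial_y$. The two invariant distributions $\mathcal{D}^x=\langle\partial_x\rangle$ and $\mathcal{D}^y=\langle\partial_y\rangle$ identified in Example \ref{DysReg} must be null for any $\mathrm{I}_4$-invariant metric: otherwise $g(\partial_x,\partial_x)$ or $g(\partial_y,\partial_y)$ would be a nonzero $\mathrm{I}_4$-invariant function on $\mathrm{Dom}(\mathrm{I}_4)=\{x\neq y\}$, which is impossible because $\mathrm{I}_4$ is transitive on this set. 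This forces the ansatz
$$g=f(x,y)\big(\mathrm{d}x\otimes\mathrm{d}y+\mathrm{d}y\otimes\mathrm{d}x\big),$$
and imposing $\mathcal{L}_{X_i}g=0$ for $i=1,2,3$ collapses to the scalar system
$$\partial_xf+\partial_yf=0,\quad x\partial_xf+y\partial_yf=-2f,\quad x^2\partial_xf+y^2\partial_yf=-2(x+y)f,$$
whose solution (up to a nonzero multiplicative constant) is $f=1/(x-y)^2$. Hence $g=(x-y)^{-2}(\mathrm{d}x\otimes\mathrm{d}y+\mathrm{d}y\otimes\mathrm{d}x)$ is an indefinite metric on $\mathrm{Dom}(\mathrm{I}_4)$ with respect to which all three generators of $\mathrm{I}_4$ are Killing.

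The main obstacle is guessing the right metric ansatz for $\mathrm{I}_4$; once the invariant-distribution argument reduces the unknown from the three functions $g_{xx},g_{xy},g_{yy}$ to the single function $f$, everything is elementary. For $\mathrm{P}_1^{\alpha=0}$ and $\mathrm{P}_3$ the analogous shortcut is provided by Lemma \ref{KKK} together with the presence of commuting non-parallel translations in each algebra.
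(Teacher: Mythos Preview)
Your treatment of $\mathrm{P}_1^{\alpha=0}$ is fine, and for $\mathrm{I}_4$ your final metric $g=(x-y)^{-2}(\mathrm{d}x\otimes\mathrm{d}y+\mathrm{d}y\otimes\mathrm{d}x)$ is correct and can be checked directly. (A small warning: the intermediate claim that $g(\partial_x,\partial_x)$ would be an $\mathrm{I}_4$-\emph{invariant} function is not right---for instance $\mathcal{L}_{X_2}g(\partial_x,\partial_x)=-2g(\partial_x,\partial_x)$, not $0$---so your motivation for the ansatz is shaky even though the ansatz itself works. The paper simply imposes the three Killing equations in turn on a general $g$ and arrives at the same metric.)

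The genuine gap is $\mathrm{P}_3$. You write that ``$\mathrm{P}_3$ also contains a pair of commuting non-parallel generators'' and invoke Lemma~\ref{KKK}. This is false: $\mathrm{P}_3\simeq\mathfrak{so}(3)$ is simple and contains \emph{no} pair of linearly independent commuting elements, so Lemma~\ref{KKK} is inapplicable and your argument does not get off the ground. You also never actually write down a metric for $\mathrm{P}_3$; a vague reference to ``the metric associated to that entry'' or to the Casimir construction is not a proof. In the paper the $\mathrm{P}_3$ case is handled by first imposing $\mathcal{L}_{X_1}g=0$ for the rotation $X_1=y\partial_x-x\partial_y$, which (together with the choice $g_{xy}=0$) leads to the conformally Euclidean ansatz $g=f(x,y)(\mathrm{d}x\otimes\mathrm{d}x+\mathrm{d}y\otimes\mathrm{d}y)$; the remaining two Killing equations then become a first-order linear system for $f$ that one solves explicitly. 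Alternatively, as done later in Section~\ref{Zast}, the Casimir element of $\mathfrak{so}(3)$ yields directly the round metric $g=(1+x^2+y^2)^{-2}(\mathrm{d}x\otimes\mathrm{d}x+\mathrm{d}y\otimes\mathrm{d}y)$, which you could simply exhibit and verify. Either route is acceptable, but you must actually carry one of them out.
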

\begin{proof}
	In coordinates $x,y$ on $\mathbb{R}^2$, every metric on $\mathbb{R}^2$ reads
	\begin{equation}\label{eq:metryka} 
	g=g_{xx}{\rm d}x\otimes{\rm d}x+
	g_{xy}({\rm d}x\otimes{\rm d}y+{\rm d}y\otimes{\rm d}x)+g_{yy}{\rm d}y\otimes{\rm d}y, 
	\end{equation}
	for certain functions $g_{xx},g_{xy},g_{yy}\in C^\infty(\mathbb{R}^2)$. Let us analyse the possible values of $g$ making the Lie algebras mentioned in this proposition to consist of Killing vector fields.
	
	$\bullet$ {\it Lie algebra} P$_{1}^{\alpha=0}$: In this case, we aim to determine functions $g_{xx},g_{xy},g_{yy}\in C^\infty(\mathbb{R}^2)$ such that the vector fields of
	$${\rm P}_1^{\alpha=0}=\big\langle X_1:=\partial_x,\ X_2:=\partial_y,\ X_3:=y\partial_x-x\partial_y\big\rangle$$
	become Killing vector fields relative to $g$, i.e. $\mathcal{L}_{X_k}g=0$ for $k=1,2,3$. Imposing this condition for $k=1,2$, we obtain 
	$$
	g=c_{xx}{\rm d}x\otimes {\rm d}x+c_{xy}({\rm d}x\otimes {\rm d}y+{\rm d}y\otimes {\rm d}x)+c_{yy}{\rm d}y\otimes {\rm d}y,\quad c_{xx},c_{xy},c_{yy}\in \mathbb{R}.
	$$
	Meanwhile, the condition below follows from the case $k=3$:
	$$
	\mathcal{L}_{y\partial_x-x\partial_y}g=(c_{xx}-c_{yy})({\rm d}x\otimes {\rm d}y+{\rm d}y\otimes {\rm d}x)+2c_{xy}({\rm d}y\otimes {\rm d}y-{\rm d}x\otimes {\rm d}x)=0.
	$$
	The last equality is satisfied if and only if 
	$
	c_{xx}=c_{yy}, c_{xy}=0.
	$
	Hence,  the  Lie algebra P$_1^{\alpha=0}$ is a Lie algebra of Killing vector fields only relative to a Euclidean metric
	$$
	g=c_{xx}({\rm d}x\otimes {\rm d}x+{\rm d}y\otimes {\rm d}y),\quad c_{xx}\in \mathbb{R}\backslash\{0\}.
	$$
	
	$\bullet$ {\it Lie algebra} P$_{3}$: Let us determine functions $g_{xx},g_{xy},g_{yy}\in C^\infty(\mathbb{R}^2)$ such that
	$${\rm P}_3=\big\langle X_1:=y\partial_x-x\partial_y,\ X_2:=(1+x^2-y^2)\partial_x+2xy\partial_y,\ X_3:=2xy\partial_x+(1+y^2-x^2)\partial_y\big\rangle$$
	consists of Killing vector fields relative to a metric $g$, i.e. $\mathcal{L}_{X_k}g=0$ for $k=1,2,3$. This condition for $k=1$ takes the form

	\begin{equation*}
	\left\{
	\begin{aligned}
	yg_{xx,x}-xg_{xx,y}-2g_{xy}&= 0,\\ 
	2g_{xy}+yg_{yy,x}-xg_{yy,y}&=0,\\
	g_{xx}-g_{yy}+yg_{xy,x}-xg_{xy,y}&=0.
	\end{aligned}
	\right.\Longleftrightarrow
	\left\{
	\begin{aligned}
	y(g_{xx}+g_{yy})_{,x}-x(g_{xx}+g_{yy})_{,y}&= 0,\\ 
	y(g_{xx}-g_{yy})_{,x}-x(g_{xx}-g_{yy})_{,y}-4g_{xy}&=0,\\
	g_{xx}-g_{yy}+yg_{xy,x}-xg_{xy,y}&=0,
	\end{aligned}
	\right.
	\end{equation*}
where every subscript given by a coordinate after a comma determines a derivative in that coordinate.
	These equations are satisfied when $g_{xx}=g_{yy}=:f(x,y)$ and $f(x,y)$ is such that $y(g_{xx}+g_{yy})_{,x}-x(g_{xx}+g_{yy})_{,y}= 0$  and
	$g_{xy}=0$. Hence, $X_1$ is a Killing vector field for 
	$$g_f:=f(x,y)[{\rm d}x\otimes{\rm d}x+{\rm d}y\otimes{\rm d}y].$$
	It is now time to determine those $f\in C^\infty(\mathbb{R}^2)$ satisfying that $\mathcal{L}_{X_k}g_f=0,\ k=2,3$. Hence,
	\begin{equation}\label{eq:KillMetX23}
	\left\{
	\begin{aligned}
	\mathcal{L}_{X_2}g&=((1+x^2-y^2)f_{,x}+2xyf_{,y}+4x)({\rm d}x\otimes{\rm d}x+{\rm d}y\otimes{\rm d}y)=0,\\
	\mathcal{L}_{X_3}g&=(2xyf_{,x}+(1+y^2-x^2)f_{,y}+4y)({\rm d}x\otimes{\rm d}x+{\rm d}y\otimes{\rm d}y)=0.\\
	\end{aligned} \right. \newline
	\end{equation}
	The system \eqref{eq:KillMetX23} amounts to
	\begin{equation}\label{eq11}
	\left\{
	\begin{aligned}
	(1+x^2-y^2)f_{,x}+2xyf_{,y}+4x&=0\\
	2xyf_{,x}+(1+y^2-x^2)f_{,y}+4y&=0
	\end{aligned} 
	\right.\Rightarrow
	\begin{pmatrix}
	f_{,x}\\ 
	f_{,y}
	\end{pmatrix}
	=-\frac{4}{1+x^2+y^2}
	\begin{pmatrix}
	x\\
	y
	\end{pmatrix},
	\end{equation}
	One of the non-zero solutions to (\ref{eq11}), away of $(0,0)$, is $f(x,y)=-2\log{(1+x^2+y^2)}$. Hence,  P$_3$ consists of Killing vector fields relative to the Riemannian metric
	\begin{equation}\label{su2}
	g=-2\log{(1+x^2+y^2)}({\rm d}x\otimes{\rm d}x+{\rm d}y\otimes{\rm d}y).
	\end{equation}
Since the Lie algebra P$_3$ is primitive, it does not admit invariant distributions.  Lemma \ref{one} ensures that  it is not a Lie algebra of Killing vector fields relative to any indefinite metric on $\mathbb{R}^2$.
	
	$\bullet$ {\it Lie algebra} I$_{4}$: We now study the Lie algebra I$_4$. In this case, $g_{xx},g_{xy},g_{yy}\in C^\infty(\mathbb{R}^2)$ must be found so as to ensure that the elements of the Lie algebra  
	$${\rm I}_4=\big\langle X_1:=\partial_x+\partial_y,\ X_2:=x\partial_x+y\partial_y,\ X_3:=x^2\partial_x+y^2\partial_y\big\rangle$$
	will become Killing vector fields relative to the metric $g$, i.e. $\mathcal{L}_{X_k}g=0$ for $k=1,2,3$. If $k=1$, then
	$$
	\mathcal{L}_{X_1}g_{xx}=\mathcal{L}_{X_1}g_{yy}=\mathcal{L}_{X_1}g_{xy}=0.
	$$
	In coordinates $\xi_1:=x-y$ and $\xi_2:=x+y$, the previous conditions imply  that $ \mathcal{L}_{X_1}f=2\partial_{\xi_2}f=0$, and then $f=f(x-y)$. Hence, $g_{xx}=h_{xx}(x-y),h_{yy}=h_{yy}(x-y),g_{xy}=h_{xy}(x-y)$ for some functions $h_{xx},h_{yy},h_{xy}\in C^\infty(\mathbb{R})$. 
	If $k=2$, we obtain the conditions
	$$
	\mathcal{L}_{X_2}g_{xx}+2g_{xx}=\mathcal{L}_{X_2}g_{xy}+2g_{xy}=\mathcal{L}_{X_2}g_{yy}+2g_{yy}=0.
	$$
	Thus,
	$$
	(x-y)h'_{xx}+2h_{xx}=(x-y)h'_{xy}+2h_{xy}=(x-y)h'_{yy}+2h_{yy}=0.
	$$
	Since the solution to $(x-y)f'(x-y)+2f(x-y)=0$ is $f(x-y)=\lambda/(x-y)^2,\ \lambda\in\mathbb{R}$, the metric $g$ takes the form
	$$
	g=\frac{1}{(x-y)^2}[c_{xx}{{\rm d}x}\otimes {{\rm d}x}+c_{xy}({\rm d}x\otimes {\rm d}y+{\rm d}y\otimes {\rm d}x)+c_{yy}{\rm d}y\otimes {\rm d}y],
	$$
	for some constants $c_{xx},c_{xy},c_{yy}$. Imposing the last condition, i.e. $k=3$, we reach to
	$$
	\mathcal{L}_{X_3}g=\frac{2}{(x-y)}[c_{xx}{{\rm d}x}\otimes {{\rm d}x}+c_{yy}{\rm d}y\otimes {\rm d}y]=0.
	$$
	Hence,  $c_{xx}=c_{yy}=0$ and I$_4$ is a Lie algebra of Killing vector fields relative to an indefinite metric. 
\end{proof}	

It is worth noting that the previous proposition ensures that P$^{\alpha=0}_1$ consists of Killing vector fields only relative to a metric equal, up to a non-zero proportional constant, to $g_E$. The Lie algebra P$_3$ consists of Killing vector fields only with respect to Riemannian metrics, and I$_4$ is a Lie algebra of Killing vector fields relative to an indefinite metric on $\mathbb{R}^2$ taking, up to a non-zero proportional constant, the form $g=({\rm d}x\otimes {\rm d}y+{\rm d}y\otimes {\rm d}x)/(x-y)^2$.  Since all previous Lie algebras have associated distributions of rank two and the curvature tensor $R$ for each metric is invariant under Killing vector fields, it follows that $R$ is covariant invariant and the corresponding spaces are {\it locally Riemannian}. 

Proposition \ref{KilP} can be reinterpreted as the consequence of the existence of a certain type of quadratic  Casimir element for the Lie algebras P$_1^{\alpha=0}$, I$_4$, P$_3$. Let us explain this relevant fact in detail, which will also allow us to describe all Lie algebras of Killing vector fields on $\mathbb{R}^2$ relative to arbitrary metrics.

Let $\mathfrak{g}$ be an abstract Lie algebra and let $\phi:\mathfrak{g}\rightarrow \mathfrak{X}(M)$ be a Lie algebra morphism. It is known that the {\it universal enveloping Lie algebra}, $U(\mathfrak{g})$, of $\mathfrak{g}$ is isomorphic to the {\it symmetric tensor algebra}, $S(\mathfrak{g})$, of $\mathfrak{g}$. This allows us to  extend $\phi$ to a unique morphism of associative algebras $\Upsilon:U(\mathfrak{g})\simeq S(\mathfrak{g})\rightarrow S(M)$, where $S(M)$ is the space of symmetric tensor fields on $M$. The Lie algebra $\mathfrak{g}$ induces a Lie algebra representation $\rho_\mathfrak{g}:\mathfrak{g}\rightarrow {\rm End}(U(\mathfrak{g}))$ by extending the derivation ${\rm ad}_v:w\in \mathfrak{g}\mapsto [v,w]\in \mathfrak{g}$, with $v\in \mathfrak{g}$, to a derivation $[v,\cdot]_{U(\mathfrak{g})}$ on $U(\mathfrak{g})$. If $V:=\phi(\mathfrak{g})$, then there exists a second Lie algebra representation ${\rho}_V:X\in V\mapsto \mathcal{L}_X\in {\rm End}(S(M))$, where $\mathcal{L}_X$ stands for the Lie derivative of symmetric tensor fields on $M$ relative to the vector field $X$.
It is easy to check that 
\begin{equation*}
\Upsilon([v,C]_{U(\mathfrak{g})})=\mathcal{L}_{\Upsilon(v)}\Upsilon(C),\qquad \forall v\in \mathfrak{g},\forall C\in U(\mathfrak{g}).
\end{equation*}
As a consequence, if $C\in U(\mathfrak{g})$ is a {\it Casimir element} of $\mathfrak{g}$, namely $[v,C]_{U(\mathfrak{g})}=0$ for all $v\in \mathfrak{g}$, then $\mathcal{L}_X \Upsilon(C)=0$ for every $X\in \phi(\mathfrak{g})$. 

Particular types of symmetric tensor fields of the form $\Upsilon(C)$, where $C$ is a Casimir for $\mathfrak{sl}(2)$, have appeared previously in \cite{BBHLS15}, where they were called {\it Casimir tensor fields}. Following this terminology, we will hereafter call the $\Upsilon(C)$, for $C$ being a Casimir for a certain Lie algebra, {\it Casimir tensor fields}.

\begin{theorem}\label{main}
Let $V$ be a Vessiot--Guldberg Lie algebra of vector fields whose isomorphic abstract Lie algebra $\mathfrak{g}$ admits a quadratic Casimir element $C\in U(\mathfrak{g})$ such that $\Upsilon(C)$ is non-degenerate. Then, $V$ consists of Killing vector fields relative to $\Upsilon(C)^{-1}$.
\end{theorem}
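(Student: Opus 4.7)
My plan is to combine the hypothesis that $C$ is a Casimir with the compatibility identity $\Upsilon([v,C]_{U(\mathfrak{g})})=\mathcal{L}_{\Upsilon(v)}\Upsilon(C)$ already displayed in the excerpt, and then transfer the resulting invariance of $\Upsilon(C)$ to its inverse.

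First, I would identify what $\Upsilon(C)$ actually is. Since $C$ is quadratic and lives in $U(\mathfrak{g})\simeq S(\mathfrak{g})$, it is a symmetric element of degree two in the generators of $\mathfrak{g}$. Hence $\Upsilon(C)$ is a symmetric contravariant $2$-tensor field on $M$, i.e. a section of $S^2(TM)$. The non-degeneracy hypothesis then allows one to form the pointwise inverse $g:=\Upsilon(C)^{-1}$, which is a non-degenerate symmetric $(0,2)$-tensor field, namely a pseudo-Riemannian metric on $M$ (on the open subset where non-degeneracy persists).

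Second, I would apply the above compatibility identity. As $C$ is a Casimir, $[v,C]_{U(\mathfrak{g})}=0$ for every $v\in\mathfrak{g}$, and consequently $\mathcal{L}_{\Upsilon(v)}\Upsilon(C)=0$. Since $V=\phi(\mathfrak{g})$ and $\phi=\Upsilon|_{\mathfrak{g}}$, this yields $\mathcal{L}_X\Upsilon(C)=0$ for every $X\in V$.

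Third, I would transfer the invariance from $\Upsilon(C)$ to $g$. Writing the defining relation $\Upsilon(C)^{ab}g_{bc}=\delta^a_c$ in local coordinates, applying $\mathcal{L}_X$ and using the Leibniz rule together with $\mathcal{L}_X\delta^a_c=0$, one obtains $\Upsilon(C)^{ab}(\mathcal{L}_Xg)_{bc}=0$; non-degeneracy of $\Upsilon(C)$ then forces $\mathcal{L}_Xg=0$. Hence every $X\in V$ is a Killing vector field for $g=\Upsilon(C)^{-1}$. The only mildly delicate point is precisely this last step, the routine but fussy verification that the Lie derivative commutes with pointwise inversion for non-degenerate symmetric $2$-tensor fields; the first two steps are almost immediate from the representation-theoretic framework set up just before the theorem.
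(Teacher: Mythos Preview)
Your proposal is correct and follows essentially the same approach as the paper: both obtain $\mathcal{L}_X\Upsilon(C)=0$ from the Casimir property via the compatibility identity, and then transfer this invariance to the inverse metric. The only cosmetic difference is that you invoke the Leibniz rule for $\mathcal{L}_X$ on the contraction $G^{ab}g_{bc}=\delta^a_c$ directly, whereas the paper expands both $\mathcal{L}_XG$ and $\mathcal{L}_Xg$ in coordinates and substitutes the matrix-inverse derivative identity $\partial_\alpha g_{\mu\nu}=-g_{\mu\pi}(\partial_\alpha g^{\pi\kappa})g_{\kappa\nu}$; these are two presentations of the same computation.
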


\begin{proof} Since $C$ is a Casimir element for $\mathfrak{g}$, it follows that $\mathcal{L}_X\Upsilon(C)=0$ for every $X\in V$, i.e. $\Upsilon(C)$ is a symmetric tensor field on $M$ invariant relative to the vector fields of $V$. Let us assume that $G:=\Upsilon(C)=g^{\mu\nu}\partial_\mu\otimes\partial_\nu$ in local coordinates. The equality $\mathcal{L}_XG=0$ for every $X\in V$ amounts, for $X=X^\alpha\partial_\alpha$, to
	\begin{equation}
	\label{eqI}
	(\mathcal{L}_XG)^{\mu\nu}=X^\alpha\partial_\alpha g^{\mu\nu}-(\partial_\alpha X^\mu) g^{\alpha\nu}-(\partial_\alpha X^\nu) g^{\alpha\mu}=0.
	\end{equation}
By assumption, $G$ is non-degenerate, i.e. the matrix $g^{\mu\nu}$ has an inverse $g_{\mu\nu}$. Let $g:=g_{\mu\nu}dx^\mu\otimes dx^\nu$. The coordinates of the Lie derivative of $g$ relative to $X$ read
\begin{equation}
\label{eqp}
(\mathcal{L}_Xg)_{\mu\nu}=X^\alpha\partial_\alpha g_{\mu\nu}+(\partial_\mu X^\alpha) g_{\alpha\nu}+(\partial_\nu X^\alpha) g_{\alpha\mu}.
\end{equation}
Substituting the equality  $\partial_{\alpha}g_{\mu\nu}=-g_{\mu\pi}(\partial_{\alpha}g^{\pi\kappa})g_{\kappa\nu}$ into (\ref{eqp}) and using (\ref{eqI}), it turns out that $\mathcal{L}_Xg=0$ and $V$ becomes a Lie algebra of Killing vector fields relative to the metric $g$.
\end{proof}

\begin{example} Let us apply Theorem \ref{main} to show that Lie algebra P$^{\alpha=0}_1$ consists of Killing vector fields relative to a metric on $\mathbb{R}^2$. Let $\mathfrak{g}$ be a Lie algebra isomorphic to P$^{\alpha=0}_1$ with a basis $v_1,v_2,v_3$ satisfying the same commutation relations as the basis of vector fields $X_1,X_2,X_3$ for P$^{\alpha=0}_1$ given in Table \ref{table3}. This gives a Lie algebra morphism $\phi:\mathfrak{g}\rightarrow \mathfrak{X}(\mathbb{R}^2)$ mapping each $v_i$ into $X_i$. The Lie algebra $\mathfrak{g}$ admits a quadratic Casimir element $v_1\otimes v_1+v_2\otimes v_2$. If $\Upsilon:U(\mathfrak{g})\rightarrow S(\mathbb{R}^2)$ is the corresponding associative algebra morphism, then  $\Upsilon(v_1\otimes v_1+v_2\otimes v_2)=X_1\otimes X_1+X_2\otimes X_2=\partial_x\otimes \partial_x+\partial_y\otimes \partial_y.$
	Hence, this tensor field is non-degenerate and the inverse is
	$$
	g={\rm d}x\otimes {\rm d}x+{\rm d}y\otimes {\rm d}y.
	$$
	This is essentially the metric detailed in Proposition \ref{KilP}.
\end{example}

\begin{example} Let us now employ the Theorem \ref{main} to show that the Lie algebra P$_2$ consists of Killing vector fields relative to a metric on $\mathbb{R}^2$. In view of Table \ref{table3}, the Lie algebra P$_2$ is isomorphic to an abstract Lie algebra $\mathfrak{sl}(2)$. Choose a basis $v_1,v_2,v_3$ thereof satisfying the same commutation relations as the basis of vector fields $X_1,X_2,X_3$ for P$_2$ given in Table \ref{table3}. This gives a Lie algebra morphism $\phi:\mathfrak{sl}(2)\rightarrow \mathfrak{X}(\mathbb{R}^2)$ mapping each $v_i$ into $X_i$. The Lie algebra $\mathfrak{sl}(2)$ admits a quadratic Casimir element $C:=v_1\otimes v_3+v_3\otimes v_1-2v_2\otimes v_2$. If $\Upsilon:U(\mathfrak{sl}(2))\rightarrow S(\mathbb{R}^2)$ is the corresponding associative algebra morphism, then  $\Upsilon(C)=X_1\otimes X_3+X_3\otimes X_1-2X_2\otimes X_2=-2y^2(\partial_x\otimes \partial_x+\partial_y\otimes \partial_y).$
	Hence, this tensor field is non-degenerate and the inverse is
	$$
	g=\frac{-1}{2y^2}({\rm d}x\otimes {\rm d}x+{\rm d}y\otimes {\rm d}y).
	$$
	A straightforward computation shows that indeed $g$ is invariant under the elements of P$_2$. Since $X_1,X_2$ span a Lie algebra diffeomorphic to I$_{14A}$ (cf. \cite{BBHLS15}), it follows that this Lie algebra also consists of Killing vector fields relative to $g$. 
\end{example}

\section{Invariant distributions for Vessiot--Guldberg Lie algebras on $\mathbb{R}^2$}
Lie proved that the Lie algebras $\{{\rm P}_i\}_{i=1,\ldots,8}$ do not admit any invariant distribution \cite{GKP92}, while Lie algebras  $\{{\rm I}_i\}_{i=1,\ldots,20}$ do. The knowledge of these distributions for Vessiot--Guldberg Lie algebras on $\mathbb{R}^2$ is relevant to the characterization of Vessiot--Guldberg Lie algebras of conformal vector fields on $\mathbb{R}^2$ (cf. \cite{GL16}). Although these distributions were employed in \cite{GL16}, it was not  detailed there how to obtain them. As a consequence, this work aims to determine the invariant distributions of the Lie algebras $\{{\rm I}_i\}_{i=1,\ldots,20}$. This task is accomplished by means of the following lemma.

\begin{lemma}\label{Lem1}
	If a Vessiot--Guldberg Lie algebra $V$ on $\mathbb{R}^2$ admits two vector fields $X_1,X_2$ such that $[X_1,X_2]=0$ and $X_1\wedge X_2\neq 0$, then every invariant distribution $\mathcal{D}$ for $V$ is spanned by a linear combinations $\lambda_1X_1+\lambda_2X_2$,  with $\lambda_1,\lambda_2\in\mathbb{R}$.
\end{lemma}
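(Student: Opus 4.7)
The plan is to work on the open dense subset $U\subset\mathbb{R}^2$ where $X_1\wedge X_2\neq 0$, where the ordered pair $(X_1,X_2)$ furnishes a local frame. Since $\mathcal{D}$ is different from $TM$ and from $M\times\{0\}$, on $U$ it has rank exactly one and is therefore locally generated by a non-vanishing vector field $Y$, which I expand in the frame as
\[
Y=f_1 X_1+f_2 X_2,\qquad f_1,f_2\in C^\infty(U).
\]
The goal reduces to proving that the ratio $f_1:f_2$ is (locally) a real constant, for then $Y$ is pointwise proportional to a fixed real combination $\lambda_1 X_1+\lambda_2 X_2$, and this is exactly a generator of $\mathcal{D}$ on $U$; by the density of $U$ and the continuity of $\mathcal{D}$, the same generator spans $\mathcal{D}$ on all of $\mathbb{R}^2$.

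The main input is the hypothesis $[X_1,X_2]=0$, which makes the commutators $[X_i,Y]$ transparent: for $i=1,2$,
\[
[X_i,Y]=(X_i f_1)X_1+(X_i f_2)X_2.
\]
Invariance of $\mathcal{D}$ under $V$ (hence under each $X_i$) forces $[X_i,Y]$ to lie in the line spanned by $Y=f_1X_1+f_2X_2$; since $X_1,X_2$ are linearly independent on $U$, this is equivalent to the proportionality
\[
f_1(X_i f_2)=f_2(X_i f_1),\qquad i=1,2,
\]
which, wherever $f_2\neq 0$, reads $X_i(f_1/f_2)=0$. Because $X_1$ and $X_2$ are linearly independent at every point of $U$, any function they both annihilate is locally constant; therefore $f_1/f_2$ equals some real constant $c$, and consequently $Y=f_2(c X_1+X_2)$, so $\mathcal{D}$ is generated by the real combination $c X_1+X_2$.

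The subtle step, and the one I expect to require the most care, is managing the loci where $f_1$ or $f_2$ vanishes. If $f_2\equiv 0$ on a non-empty open piece of $U$, the argument above instead yields $Y=f_1 X_1$, i.e.\ $\mathcal{D}$ is generated there by $X_1$ alone (the case $\lambda_2=0$); symmetrically for $f_1\equiv 0$ (the case $\lambda_1=0$). These cases are consistent with the statement, and a connectedness/continuity argument on $\mathbb{R}^2$ prevents the value of the constant $c$ (or the special cases $X_1$, $X_2$) from jumping between components, so a single real pair $(\lambda_1,\lambda_2)$ describes $\mathcal{D}$ throughout.
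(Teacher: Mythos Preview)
Your proposal is correct and follows essentially the same route as the paper: write a generator of $\mathcal{D}$ in the frame $(X_1,X_2)$, take Lie brackets with each $X_i$ using $[X_1,X_2]=0$, and use invariance plus the linear independence of $X_1,X_2$ to force the coefficient ratio to be constant. The paper's version is slightly more economical because it normalizes the generator at the outset to either $X_2$ or $X_1+\mu X_2$, which absorbs your case analysis on the vanishing loci of $f_1,f_2$ into the initial case split.
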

\begin{proof} Since  $X_1\wedge X_2\neq 0$, the invariant distribution $\mathcal{D}$ for $V$ can be generated by means of a vector field of the form
	$X_2$ or $X_1+\mu X_2$ for a certain function 
	$\mu\in C^\infty(\mathbb{R}^2$). If $\mathcal{D}$ is generated by $X_2$, then the lemma follows. If  $\mathcal{D}$ 
	is generated by $X_1+\mu X_2$, then there exist functions $f_i\in C^\infty(\mathbb{R}^2)$, with $i=1,2$, such that
	\begin{equation*}
	[X_i, X_1+\mu X_2]=(X_i\mu)X_2 =f_i(X_1 +\mu X_2),\qquad i=1,2.
	\end{equation*}
	Since $X_1\wedge X_2\neq 0$, it follows that  $f_1=f_2=0$. Moreover, $X_i\mu=0$ for $i=1,2$ and $\mu={\rm const}$. 
	Therefore, $\mathcal{D}$ is generated by $\lambda_1X_1+\lambda_2X_2$ for certains $\lambda_1,\lambda_2\in\mathbb{R}$.
\end{proof}

\begin{theorem}\label{DysNiezm}
	If  a Vessiot--Guldberg Lie algebra $V$ on $\mathbb{R}^2$ contains two linearly independent vector fields $X_1, X_2$ such that $[X_1,X_2]=0$ and $X_1\wedge X_2= 0$, then every distribution  $\mathcal{D}$ invariant relative to $V$ is generated by $X_1$.

\end{theorem}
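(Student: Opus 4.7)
The plan is to exploit the structure forced on $X_1, X_2$ by the hypotheses and then squeeze an arbitrary invariant distribution $\mathcal{D}$ between $X_1$ and a level-set argument. First, since $X_1\wedge X_2=0$, the vectors $(X_1)_\xi,(X_2)_\xi$ are collinear at every $\xi\in\mathbb{R}^2$; on the open dense set where $X_1\neq 0$ we may therefore write $X_2=fX_1$ for some $f\in C^\infty$. Linear independence of $X_1,X_2$ as elements of the vector space $V$ forces $f$ to be non-constant. Plugging this into the hypothesis $[X_1,X_2]=0$ yields $0=[X_1,fX_1]=(X_1 f)X_1$, and hence $X_1 f=0$. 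Thus $f$ is a non-constant first integral of $X_1$.

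Next, I would take an arbitrary distribution $\mathcal{D}$ invariant under $V$. Since $\mathcal{D}\neq M\times\{0\},TM$ and we are on a surface, $\mathcal{D}$ has rank one on an open set; let $Y$ locally generate $\mathcal{D}$ on a neighbourhood of a generic point. Invariance gives $[X_1,Y],[X_2,Y]\in \mathcal{D}$. Expanding the second bracket,
\begin{equation*}
[X_2,Y]=[fX_1,Y]=f[X_1,Y]-(Yf)X_1,
\end{equation*}
and since $f[X_1,Y]\in\mathcal{D}$, I get $(Yf)X_1\in\mathcal{D}$.

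Now comes the case analysis, which is the cleanest way to finish. If there is any point of the chosen neighbourhood at which $Yf\neq 0$, then $X_1$ itself lies in $\mathcal{D}$ there; because $\mathcal{D}$ has rank one and $X_1\neq 0$ at that generic point, $Y$ is proportional to $X_1$ and $\mathcal{D}$ is generated by $X_1$. Otherwise $Yf\equiv 0$ on the neighbourhood, so $Y$ is tangent to the level sets of $f$. But $X_1 f=0$ too, so $X_1$ is tangent to those same level sets; since $f$ is non-constant, $df\neq 0$ on an open dense set, where the kernel of $df$ is one-dimensional. Hence $Y$ and $X_1$ are proportional on that dense set, so $\mathcal{D}$ is again generated by $X_1$. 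Global conclusion follows by continuity and the connectedness hypothesis on $M$.

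The main obstacle I anticipate is a cosmetic one: handling the (closed, nowhere-dense) loci where $X_1$ vanishes or $df=0$, and making sure the conclusion $\mathcal{D}=\langle X_1\rangle$ is stated only at generic points of $V$, so that the rank-one and linear-independence manipulations are meaningful. Everything else is a direct computation with the Leibniz rule for the Lie bracket together with the observation that $f$ is a non-constant first integral of $X_1$, which constrains any invariant line field either to contain $X_1$ or to lie in $\ker df$, and both options coincide.
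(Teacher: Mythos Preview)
Your argument is correct and reaches the same conclusion as the paper, but the route is genuinely different. The paper introduces an auxiliary vector field $X_3$ with $X_1\wedge X_3\neq 0$ (not assumed to lie in $V$) and parametrizes the generator of $\mathcal{D}$ as either $X_3$ or $X_1+\mu X_3$; it then computes $[X_1,\,\cdot\,]$ and $[fX_1,\,\cdot\,]$ against this generator and, after cancelling, obtains $\mu\,X_3f=0$, forcing $\mu=0$ (else $f$ would be constant). Your approach avoids the auxiliary parametrization entirely: you work with an arbitrary generator $Y$, extract $(Yf)X_1\in\mathcal{D}$ directly from the Leibniz rule, and finish with the level-set observation that both $X_1$ and $Y$ lie in the one-dimensional $\ker df$. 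This is cleaner and more geometric; the paper's version is more computational but has the minor advantage of never needing to invoke the regularity of $df$ separately.

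One small remark: your Case~1 is in fact vacuous. If $(Yf)(p)\neq 0$ forced $Y(p)\in\langle X_1(p)\rangle$, then $Yf(p)$ would be a multiple of $(X_1f)(p)=0$, a contradiction. So the dichotomy collapses to Case~2 alone, and the entire argument is really the $\ker df$ step. This does not affect correctness---Case~1 still yields the desired conclusion, just emptily---but it is worth noting that the two cases you set up are not both live. The paper's two cases ($\mathcal{D}=\langle X_3\rangle$ versus $\mathcal{D}=\langle X_1+\mu X_3\rangle$) are both genuinely possible a priori, and the first is eliminated by the same ``$f$ constant'' contradiction.
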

\begin{proof}  Since  $X_1$, $X_2$ are linearly independent vector fields of $V$ satisfying $X_1\wedge X_2=0$ and $[X_1,X_2]=0$ by assumption, then there exists $f\in C^\infty(\mathbb{R}^2)$ such that $X_2=fX_1$ and $X_1f=0$. 
	Let $X_3$ be a vector field satisfying $X_1\wedge X_3\neq 0$. As $\mathcal{D}$ is a one-dimensional distribution and $X_1\wedge X_3\neq 0$, it is therefore generated by  $X_3$ or $X_1+\mu X_3$ for a certain  $\mu \in C^\infty(\mathbb{R}^2)$. If $\mathcal{D}$ is generated by $X_3$, then $[fX_1,X_3]=f_3X_3$ for a certain $f_3\in C^\infty(\mathbb{R}^2)$ and $[fX_2,X_1]=0$ by assumption. Since $X_1\wedge X_3\neq 0$, it follows that $X_3f=0$ and $X_1f=0$. Then, $f$ is a constant and $X_2$ and $X_1$ are linearly independent, which is a contradiction and shows that $\mathcal{D}$ cannot be spanned by $X_3$. Let us assume that $\mathcal{D}$ is generated by $X_1+\mu X_3$. Since $\mathcal{D}$ is invariant relative to		$X_1,X_2$, there exist functions $f_1,f_2\in C^\infty(\mathbb{R}^2)$ such that
	\begin{align}
	[X_1,X_1+\mu X_3]&=(X_1\mu)X_3+\mu [X_1,X_3]=f_1(X_1+\mu X_3), \label{eq:123}\\
	[fX_1,X_1+\mu X_3]&=f(X_1\mu)X_3+f\mu [X_1,X_3]-\mu(X_3f)X_1=f_2(X_1+\mu X_3) \label{eq:1234}
	\end{align}
	Substituting \eqref{eq:123} in \eqref{eq:1234} and recalling that $X_1\wedge X_3\neq 0$, we obtain that
	$$
	ff_1(X_1+\mu X_3)-\mu(X_3f)X_1=f_2(X_1+\mu X_3)\Rightarrow (ff_1-\mu X_3f-f_2)X_1+(ff_1\mu-f_2\mu)X_3=0.
	$$
	Since $X_3\wedge X_1\neq 0$, then $\mu X_3f=0$. We have two options, $\mu=0$ or $\mu\neq 0$. Let us assume that $\mu\neq 0$. Then, $X_3f=0$ and, as $X_1\wedge X_3\neq 0$ and $X_1f=0$ which is a consequence of the assumption $[X_2,X_1]=0$, we obtain that $f$ is a constant, which goes against our assumption that $X_1,X_2$ are linearly independent. In consequence, $\mu=0$ and $\mathcal{D}$ is generated by $X_1$. 
\end{proof}

\begin{corollary}The Lie algebras ${\rm I}_{12},{\rm I}_{13},{\rm I}_{16}-{\rm I}_{20}$ and ${\rm I}_{14},{\rm I}_{15}$ for $r>1$
	admit only one invariant distribution generated by $\partial_y$.
\end{corollary}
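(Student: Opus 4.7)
The plan is to apply Theorem \ref{DysNiezm} to each of the Lie algebras listed. For every one of them we need to exhibit two linearly independent, commuting vector fields $X_1,X_2$ in the algebra with $X_1\wedge X_2=0$, and then to verify separately that the distribution $\langle \partial_y\rangle$ is indeed invariant under the whole algebra. Theorem \ref{DysNiezm} will handle the uniqueness part; the existence part is a routine bracket check.

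Concretely, I would go through Table \ref{table3} and read off, for each of ${\rm I}_{12},{\rm I}_{13},{\rm I}_{16},\ldots,{\rm I}_{20}$ and for ${\rm I}_{14},{\rm I}_{15}$ with $r>1$, a pair of generators of the form
\begin{equation*}
X_1=\partial_y,\qquad X_2=\eta(x)\partial_y,
\end{equation*}
with $\eta(x)$ non-constant. Such a pair exists in precisely these algebras because they all contain $\partial_y$ together with at least one further generator $\eta(x)\partial_y$ in the ``pure'' $\partial_y$-direction (this is exactly the reason the parameter $r>1$ is needed in ${\rm I}_{14},{\rm I}_{15}$: for $r=1$ no second such generator is present and we fall outside the hypothesis of the theorem). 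Since both vector fields depend only on $x$ in their second component and annihilate each other's coefficient, $[X_1,X_2]=\partial_y(\eta(x))\partial_y=0$; they are linearly independent in the Lie algebra because $1$ and $\eta(x)$ are linearly independent functions; and $X_1\wedge X_2=0$ as vector fields because both are proportional to $\partial_y$ pointwise. Theorem \ref{DysNiezm} then forces any invariant distribution of $V$ to be generated by $X_1=\partial_y$.

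It remains to verify that $\langle\partial_y\rangle$ really is invariant for each algebra in the list, i.e.\ that $[\partial_y,Y]\in\langle\partial_y\rangle$ for every generator $Y$ in Table \ref{table3}. For generators of the form $\eta(x)\partial_y$, $\xi(x)\partial_x$, or $y\partial_y$, this is immediate because the bracket with $\partial_y$ is either zero or a multiple of $\partial_y$. The few remaining generators appearing in ${\rm I}_{16}$--${\rm I}_{20}$ (such as $x\partial_x+cy\partial_y$ type scaling fields) also bracket with $\partial_y$ to give a constant multiple of $\partial_y$. Thus $\langle\partial_y\rangle$ is invariant in every case, proving existence.

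There is no real obstacle here beyond careful bookkeeping: the only subtle points are (i) recognising why the restriction $r>1$ for ${\rm I}_{14},{\rm I}_{15}$ is indispensable -- it is precisely what guarantees a second commuting, parallel generator so that Theorem \ref{DysNiezm} is applicable -- and (ii) making sure no generator of ${\rm I}_{16}$--${\rm I}_{20}$ produces a bracket with $\partial_y$ that escapes $\langle\partial_y\rangle$, which a direct inspection of Table \ref{table3} confirms.
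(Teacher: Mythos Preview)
Your proposal is correct and follows essentially the same route as the paper: pick $X_1=\partial_y$ and $X_2=\eta_1(x)\partial_y$ from the basis in Table~\ref{table3} and invoke Theorem~\ref{DysNiezm}. The only difference is that you add an explicit check that $\langle\partial_y\rangle$ is actually invariant under all the listed generators, whereas the paper leaves this to the reader (or to the GKO classification's statement that these algebras are imprimitive).
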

\begin{proof} In view of Table \ref{table3}, the above mentioned Lie algebras contain  the vector fields $X_1:=\partial_y,X_2:=\eta_1(x)\partial_y$. By applying then Theorem \ref{DysNiezm},  we obtain that every invariant distribution is generated by  $X_1$.

\end{proof}

\begin{theorem}\label{OstLem}
	Let $V$ be a Lie algebra containing some vector fields  $X_1,X_2,X_3$ on $\mathbb{R}^2$ such that
	$X_1\wedge X_2\neq 0$, $[X_1,X_2]=0$. Let $\mathcal{D}$ be an invariant distribution on $\mathbb{R}^2$ relative to $V$. Hence:
	\begin{enumerate}
		\item[a)] If  $[X_1,X_3]=X_2$ and $[X_3,X_2]=0$, then $\mathcal{D}$ is spanned by $X_2$,
		\item[b)] If  $[X_1,X_3]=X_1,$ then $\mathcal{D}$ is generated by $X_1$ or $X_2$.
	\end{enumerate}
\end{theorem}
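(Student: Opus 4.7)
The plan is to combine Lemma \ref{Lem1} with the extra constraint imposed by invariance under $X_3$. Since $X_1,X_2\in V$ satisfy $[X_1,X_2]=0$ and $X_1\wedge X_2\neq 0$, Lemma \ref{Lem1} tells us that any $V$-invariant distribution $\mathcal{D}$ is spanned by a vector field $Y=\lambda_1 X_1+\lambda_2 X_2$ with constants $\lambda_1,\lambda_2\in\mathbb{R}$, not both zero. Because $X_3\in V$, invariance of $\mathcal{D}$ forces $[X_3,Y]=hY$ for some $h\in C^\infty(\mathbb{R}^2)$, and since $X_1,X_2$ are pointwise linearly independent at generic points, this single identity may be split into its $X_1$- and $X_2$-components. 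Each case then reduces to elementary algebra on $\lambda_1,\lambda_2,h$ together with a bit of Jacobi bookkeeping.

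For part (a), the relations $[X_1,X_3]=X_2$ and $[X_3,X_2]=0$ give $[X_3,Y]=-\lambda_1 X_2$. Matching with $hY=h\lambda_1 X_1+h\lambda_2 X_2$ yields $h\lambda_1=0$ and $h\lambda_2=-\lambda_1$. If $\lambda_1$ were nonzero, the first equation would force $h=0$, but then the second would give $\lambda_1=0$, a contradiction; hence $\lambda_1=0$, $\lambda_2\neq 0$, and $\mathcal{D}=\langle X_2\rangle$. For part (b), $[X_1,X_3]=X_1$ gives $[X_3,Y]=-\lambda_1 X_1+\lambda_2[X_3,X_2]$. If $\lambda_2=0$, then $Y$ is a nonzero multiple of $X_1$ and $\mathcal{D}=\langle X_1\rangle$. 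If $\lambda_2\neq 0$, I would expand $[X_3,X_2]=aX_1+bX_2$ at generic points and compare components in $[X_3,Y]=hY$ to get $h=b$ and the pointwise identity $\lambda_2 a-\lambda_1 b=\lambda_1$. The Jacobi identity applied to the triple $X_1,X_2,X_3$, combined with $[X_1,X_2]=0$, yields $[X_1,[X_3,X_2]]=0$, and hence $X_1 a=X_1 b=0$. Propagating the invariance requirement to the further elements $[X_3,X_2]\in V$ and, if necessary, $[X_3,[X_3,X_2]]\in V$, should then force $\lambda_1=0$, delivering $\mathcal{D}=\langle X_2\rangle$.

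I expect the main obstacle to sit in part (b): a single application of invariance under $X_3$ is not sufficient to kill the $\lambda_1$ degree of freedom when $\lambda_2\neq 0$, since the equation $\lambda_2 a=\lambda_1(1+b)$ is formally compatible with many pairs $(\lambda_1,\lambda_2)$. Closing the argument requires iterating the invariance condition against successive derived brackets of $X_3$ with $X_2$ inside $V$, and the delicate point will be to verify that the accumulating system of differential identities on $a$ and $b$ cannot sustain $\lambda_1\neq 0$—a step that genuinely uses the Vessiot--Guldberg rigidity of $V$ rather than just the three relations on $X_1,X_2,X_3$ stated in the hypothesis.
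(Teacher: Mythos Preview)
Your argument for part (a) is correct and is essentially identical to the paper's.

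For part (b), however, you have correctly sensed a genuine problem, but your proposed resolution cannot succeed. The paper's own proof simply computes
\[
[X_3,c_1X_1+c_2X_2]=-c_1X_1=f_1(c_1X_1+c_2X_2),
\]
which already presupposes $[X_3,X_2]=0$; with that extra hypothesis the conclusion follows in one line exactly as in (a), by reading off $c_1(f_1+1)=0$ and $f_1c_2=0$. So the paper is tacitly using $[X_3,X_2]=0$ in (b) as well, and this holds in every application made in the subsequent Corollary (for ${\rm I}_6,{\rm I}_7,{\rm I}_9,{\rm I}_{10},{\rm I}_{11}$ one takes $X_1=\partial_x$, $X_2=\partial_y$ or $y\partial_y$, $X_3=x\partial_x$).

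Your attempt to dispense with that hypothesis and close the case $\lambda_2\neq 0$ by iterating brackets or invoking ``Vessiot--Guldberg rigidity'' cannot work, because without $[X_3,X_2]=0$ the stated conclusion of (b) is actually false. Take $X_1=\partial_x$, $X_2=\partial_y$, $X_3=(x-y)\partial_x-y\partial_y$ on $\mathbb{R}^2$. Then $[X_1,X_2]=0$, $X_1\wedge X_2\neq 0$, $[X_1,X_3]=X_1$, and $[X_3,X_2]=X_1+X_2$, so $V=\langle X_1,X_2,X_3\rangle$ is a three-dimensional Vessiot--Guldberg Lie algebra satisfying all hypotheses of (b). Yet $\mathcal{D}=\langle \partial_x+2\partial_y\rangle$ is $V$-invariant, since $[X_3,\partial_x+2\partial_y]=\partial_x+2\partial_y$, and it is spanned neither by $X_1$ nor by $X_2$. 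This is precisely the solution $\lambda_2 a=(1+b)\lambda_1$ with $a=b=1$ that your own equations predict. The right fix is therefore not to push your iteration further but to add $[X_3,X_2]=0$ to the hypotheses of (b), after which the short argument the paper gives goes through.
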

\begin{proof} From the assumptions of this theorem and  Lemma \ref{Lem1} follow that the distribution $\mathcal{D}$ has to be generated by a linear combination  with real coefficients of $X_1,X_2$.   
	
	Let us prove a). Since $\mathcal{D}$ is invariant relative to $X_3$ by assumption, there exists $f_1\in C^\infty(\mathbb{R}^2)$ and $c_1,c_2\in \mathbb{R}$ with $c_1^2+c_2^2\neq 0$  such that
	$$
	[X_3,c_1X_1+c_2X_2]=-c_1X_2=f_1(c_1X_1+c_2X_2)\Rightarrow (f_1c_2+c_1)X_2+f_1c_1X_1=0.
	$$
	As $X_1\wedge X_2\neq 0$, then  $c_1=0$ and $\mathcal{D}$ is generated by $X_2$. 
	
	We now turn to prove b). Since $\mathcal{D}$ is invariant relative to $X_3$, there exist $f_1\in C^\infty(\mathbb{R}^2)$ and $c_1,c_2\in \mathbb{R}$ with $c_1^2+c_2^2\neq 0$ such that
	$$
	[X_3,c_1X_1+c_2X_2]=-c_1X_1=f_1(c_1X_1+c_2X_2)\Rightarrow c_1(f_1+1)X_1+f_1c_2X_2=0.
	$$
	Hence, there exist two possibilities: $f_1=0$ and therefore $c_1=0$, which implies $\mathcal{D}$ is generated by $X_2$; or $f_1\neq 0$, which gives $c_2=0$ and $\mathcal{D}$ is generated by $X_1$. 
\end{proof}

\begin{corollary}The Lie algebras {\rm I}$_6$,  {\rm I}$_9$, {\rm I}$_{10}$, and {\rm I}$_{11}$ have only two invariant distributions spanned by $\partial_x$ and $\partial_y$. The Lie algebra {\rm I}$_7$ has only one invariant distribution  spanned by $X=\partial_y$. 
\end{corollary}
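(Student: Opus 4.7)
The plan is to reduce the corollary to a direct application of Theorem \ref{OstLem}, followed by an explicit verification against each basis listed in Table \ref{table3}. Inspection of the table shows that each of the Lie algebras I$_6$, I$_7$, I$_9$, I$_{10}$, I$_{11}$ contains the commuting, linearly independent pair $X_1:=\partial_x$ and $X_2:=\partial_y$, so any invariant distribution must in particular be tangent to some member of the family dictated by Lemma \ref{Lem1}, i.e.\ spanned by $\lambda_1\partial_x+\lambda_2\partial_y$ with $(\lambda_1,\lambda_2)\in\mathbb{R}^2\setminus\{0\}$.

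Next, for each of I$_6$, I$_9$, I$_{10}$, I$_{11}$ I would single out a third generator $X_3$ in the basis that satisfies $[X_1,X_3]=X_1$ (or, after an interchange, the analogous relation for $X_2$); typical candidates are $x\partial_x$, $y\partial_y$ or $x\partial_x+y\partial_y$ as supplied by the corresponding entries of Table \ref{table3}. Case (b) of Theorem \ref{OstLem} then forces every invariant distribution to be generated by either $X_1=\partial_x$ or $X_2=\partial_y$. It remains to check that both $\partial_x$ and $\partial_y$ are actually invariant for I$_6$, I$_9$, I$_{10}$, I$_{11}$: this amounts to computing Lie brackets of $\partial_x$ (resp.\ $\partial_y$) against each basis vector field and observing that the outputs lie in $\langle\partial_x\rangle$ (resp.\ $\langle\partial_y\rangle$). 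These computations are of the same type as those displayed in the example following \eqref{I4} and are routine.

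For I$_7$ the strategy is identical in its first half—Theorem \ref{OstLem} still narrows the candidates down to $\langle\partial_x\rangle$ and $\langle\partial_y\rangle$—but the final verification has a different outcome. I would compute $[\partial_y,X]$ for each element $X$ of the basis of I$_7$ in Table \ref{table3} and observe that every such bracket stays proportional to $\partial_y$, so $\langle\partial_y\rangle$ is indeed invariant. Conversely, to exclude $\langle\partial_x\rangle$ I would exhibit an element of I$_7$ (in the basis given by Table \ref{table3}) whose Lie bracket with $\partial_x$ contains a nonzero $\partial_y$-component, thereby showing $\langle\partial_x\rangle$ is not invariant under I$_7$.

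The main obstacle is purely bookkeeping rather than conceptual: choosing in each case the correct element of the basis of Table \ref{table3} to play the role of $X_3$ in Theorem \ref{OstLem}, and then carrying out the finite number of bracket computations needed to confirm invariance for I$_6$, I$_9$, I$_{10}$, I$_{11}$ while ruling out $\langle\partial_x\rangle$ for I$_7$. Once the right $X_3$ is identified and the table's explicit generators are substituted, the verifications reduce to brief commutator calculations analogous to those in Example \ref{DysReg} and its follow-up, so no new geometric idea beyond Lemma \ref{Lem1} and Theorem \ref{OstLem} is needed.
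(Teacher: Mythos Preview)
Your approach matches the paper's exactly---apply Theorem \ref{OstLem}(b) to reduce the candidates to the distributions spanned by $X_1$ or $X_2$, then verify by direct bracket computation---but there is one factual slip for I$_7$. Contrary to your claim, $\partial_y\notin{\rm I}_7$: the basis in Table \ref{table3} is $\partial_x,\ y\partial_y,\ x\partial_x,\ x^2\partial_x+xy\partial_y$, and no linear combination of these yields $\partial_y$. The fix is immediate: take $X_1=\partial_x$ and $X_2=y\partial_y$, which do lie in I$_7$, commute, and satisfy $X_1\wedge X_2\neq 0$ on the domain $y\neq 0$; with $X_3=x\partial_x$ one has $[X_1,X_3]=X_1$, so Theorem \ref{OstLem}(b) forces any invariant distribution to be spanned by $\partial_x$ or by $y\partial_y$ (equivalently $\partial_y$). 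The paper proceeds in exactly this way, using the first two basis vectors of each algebra as listed in the table. Your exclusion of $\langle\partial_x\rangle$ for I$_7$ then goes through via $[\partial_x,\,x^2\partial_x+xy\partial_y]=2x\partial_x+y\partial_y$.
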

\begin{proof} 	The vector fields of I$_7=\langle X_1,X_2,X_3,X_4\rangle$, where $X_1,\ldots,X_4$ are given in Table \ref{table3}, are such that $X_1,X_2,X_3$ obey the conditions of the case b) of Theorem \ref{OstLem}. Hence, their invariant distributions are generated by  $X_1$ or $X_2$. A straightforward computation shows that the only invariant distribution is  $X_2=\partial_y$. 
	
	Similarly, it can be proved that the invariant distributions for I$_6$, I$_9$, I$_{10}$, and I$_{11}$ are generated by  $X_1$ or $X_2$, where these vector fields are those ones indicated in Table \ref{table3}. A simple calculation shows that each of these vector fields generate an invariant distribution for the mentioned Lie algebras.
\end{proof}
\section{Applications in Physics}\label{Zast}

This section illustrates the physical relevance of systems of differential equations whose dynamic can be determined by Vessiot--Guldberg Lie algebras of conformal and Killing vector fields on $\mathbb{R}^2$ relative to a certain metric $g$. The results of previous sections are employed to construct $g$ and to prove that Vessiot--Guldberg Lie algebras consisting of Killing vector fields relative to $g$ are also Lie algebras of Hamiltonian vector fields relative to the symplectic structure induced by $g$. This much improves results in \cite{BBHLS15}, where such structures were obtained by long and tedious calculations.

\subsection{Milne--Pinney equations}
The Milne--Pinney equations, known by their many applications in Physics \cite{LA08} and mathematical properties \cite{Ru16}, take the form
\begin{equation}\label{MP}
\frac{{\rm d}^2x}{ {\rm d} t^2}=-\omega^2(t)x+\frac{c}{x^3},
\end{equation}
where $\omega(t)$ is any function depending on $t$ and $c\in \mathbb{R}$. If we define $y:={\rm d}x/ {\rm d}t$, the above differential equation can be rewritten as 
\begin{equation}\label{eq:FirstLie}
\left\{
\begin{aligned}
\frac{{\rm d} x}{ {\rm d} t}&=y,\\
\frac{{\rm d} y}{ {\rm d} t}&=-\omega^2(t)x+\frac{c}{x^3}.
\end{aligned} \right.
\end{equation}
System \eqref{eq:FirstLie} describes the integral curves of the $t$-dependent vector field (cf. \cite{JdL11}) $\mathcal{X}:=X_3+\omega^2(t)X_1,$ with
\begin{equation}\label{FirstLieA}
X_1=-x\partial_y,\qquad X_2=\frac 12 \left(y\partial_y-x\partial_x\right),\qquad X_3=y\partial_x+\frac{c}{x^3}\partial_y.
\end{equation}
The vector fields $X_1,X_2,X_3$  form a basis of a Lie algebra $V_{\rm MP}$.  Let us study $V_{\rm MP}$.
The matrix of its Killing form, $\kappa$, in the basis $\mathcal{B}:=\{X_1,X_2,X_3\}$ takes the form
$$
[\kappa]_\mathcal{B}=\left(\begin{array}{ccc}
0&0&-4\\
0&2&0\\
-4&0&0
\end{array}\right).
$$
Hence, the Killing form is non-degenerate and indefinite. The Cartan criterium \cite{trautman} ensures that the Lie algebra $V_{\rm MP}$ is semi-simple. Geometrically, Table \ref{table3} shows that every three-dimensional semi-simple Lie algebra of vector fields on the plane is isomorphic to $\mathfrak{sl}(2)$ or to $\mathfrak{so}(3)$. Algebraically, every semi-simple three-dimensional Lie algebra only admits such two options (cf. \cite{SW14}). Since $V_{\rm MP}$ is indefinite, $V_{\rm MP}$ is isomorphic to $\mathfrak{sl}(2)$.

Consider the Lie algebra $\mathfrak{sl}(2)$ and a basis $\{v_1,v_2,v_3\}$ thereof satisfying the same commutation relations as $X_1,X_2,X_3$. This induced a Lie algebra morphism $\phi:\mathfrak{sl}(2)\rightarrow \mathfrak{X}(\mathbb{R}^2)$ mapping each $v_i$ onto $X_i$. This gives rise to an associative algebra morphism $\Upsilon:U(\mathfrak{sl}(2))\rightarrow S(\mathbb{R}^2)$. The Lie algebra $\mathfrak{sl}(2)$ admits a quadratic Casimir element 
$$
C:=v_1\otimes v_3+v_3\otimes v_1-2v_1\otimes v_1.
$$
Therefore
$$
G:=\Upsilon(C)=X_1\otimes X_3+X_3\otimes X_1-2X_2\otimes X_2.
$$
In view of the coordinate expression for $X_1,X_2,X_3$, it follows that  
$$
G=-\frac{x^2}{2}\partial_x\otimes\partial_x-\left(\frac{2c}{x^2}+\frac{y^2}2\right)\partial_y\otimes \partial_y-\frac 12xy(\partial_x\otimes \partial_y+\partial_y\otimes\partial_x)\Rightarrow \det G=c.
$$
Hence, the tensor field $G$ is non-degenerate for $c\neq 0$. Then, Theorem \ref{main} ensures that the Lie algebra $V_{\rm MP}$ consists of Killing vector fields relative to
$$
g:=G^{-1}=-\left(\frac{2}{x^2}+\frac{y^2}{2c}\right){\rm d}x\otimes {\rm d}x+\frac{xy}{2c}({\rm d}x\otimes {\rm d}y+{\rm d}y\otimes {\rm d}x)-\frac{x^2}{2c}{\rm d}y\otimes {\rm d}y.
$$
The associated symplectic structure is given by $\omega:=\star 1$, i.e.
$$
\omega=\sqrt{|c|}{\rm d}x\wedge {\rm d}y.
$$
The vector fields of $V_{\rm MP}$ become Hamiltonian relative to $\omega$. In this simple manner, it was possible to obtain a symplectic form turning the elements of $V_{\rm MP}$ into Hamiltonian vector fields algebraically. Meanwhile, this result had to be obtained by solving a system of PDEs or by guessing the form of $\omega$ in previous works \cite{BBHLS15,CLS12}.

\subsection{Schr\"odinger equation on $\mathbb{C}^2$}
Let $\mathcal{H}$ be an $n$-dimensional Hilbert space with a  scalar product $\langle\cdot,\cdot \rangle$, let $H(t)\subset{\rm End}(\mathcal{H})$ be a Hermitian Hamiltonian operator on $\mathcal{H}$ for every $t\in\mathbb{R}$, and let $\{\psi_i\}_{i\in \overline{1,n}}\in\mathcal{H}$ be an orthonormal basis of quantum states, i.e $\langle\psi_i|\psi_j\rangle=\delta_{ij}$, $i\in \overline{1,n}$. It is possible to define in
$\mathcal{H}_0:=\mathcal{H}\backslash\{0\}$ an equivalence relation
$$\psi_1 \sim \psi_2 \Leftrightarrow \exists \lambda\in\mathbb{C}\backslash\{0\}: \psi_1=\lambda\psi_2,$$ which gives rise to the complex projective space $\mathcal{PH}:=\mathcal{H}_0/ \sim$ as its space of equivalence classes. Since this is also the space of orbits of the free and proper multiplicative action of the Lie group $\mathbb{C}_0:=\mathbb{C}\backslash\{0\}$ on $\mathbb{C}^n_0:=\mathbb{C}^n\backslash\{0\}$, the space $\mathcal{PH}:=\mathcal{H}_0/ \sim$ becomes a manifold.

Let $\mathbb{C}^2_0\ni \psi\mapsto [\psi] \in\mathbb{C}{P}^1\simeq \mathbb{C}^2_0/\mathbb{C}_0,\ \psi:=(z_1,z_2)$ be the projection from $\mathbb{C}_0^2$ onto its projective space. A $t$-dependent Schr\"odinger equation on  $\mathcal{H}$ induced by a $t$-dependent Hamiltonian $H(t)$ takes the form
\begin{equation*}
\frac{{\rm d}\psi}{ {\rm d}t}=-{\rm i}H(t)\psi\Leftrightarrow 
\frac{{\rm d}}{ {\rm d}t}
\begin{pmatrix}
z_1\\
z_2
\end{pmatrix}=-{\rm i}H(t)
\begin{pmatrix}
z_1\\
z_2
\end{pmatrix}=-{\rm i}
\begin{pmatrix}
\lambda_1(t) & b(t)\\
\bar{b}(t) & \lambda_2(t) 
\end{pmatrix}
\begin{pmatrix}
z_1\\
z_2
\end{pmatrix},\,\, 
\end{equation*}
for $b(t):=b_1(t)+{\rm i}b_2(t)$, $\lambda_i,b_i\in\mathbb{R}.$
If $\mu:=z_1z_2^{-1},\ z_1\in \mathbb{C},\ z_2\in\mathbb{C}_0$, then 
$$
\frac{{\rm d} \mu}{{\rm d}t}={\rm i}[\bar{b}(t)\mu^2+(\lambda_2(t)-\lambda_1(t))\mu-b(t)].
$$ 

Making a change of variables $\mu=x+{\rm i}y,\ x,y\in\mathbb{R}$,  and gathering together the parts real and imaginary of the previous system in the new variables, 
we obtain
\begin{equation*}
\left\{
\begin{aligned}
\frac{ {\rm d}x}{{\rm d}t}&=b_2(t)(x^2-y^2+1)-(\lambda_2(t)-\lambda_1(t))y-2b_1(t)xy\\ 
\frac{{\rm d}y}{ {\rm d}t}&=b_1(t)(x^2-y^2-1)+(\lambda_2(t)-\lambda_1(t))x+2b_2(t)xy,
\end{aligned}
\right.
\end{equation*}

\noindent describing the integral curves of the $t$-dependent vector field on $\mathbb{C}P^1$ of the form
\begin{equation*}
X=b_1(t)X_1+b_2(t)X_2+(\lambda_2(t)-\lambda_1(t))X_3, 
\end{equation*}
$$-X_1:=2xy\partial_x+(1+y^2-x^2)\partial_y,\quad X_2:=(x^2-y^2+1)\partial_x+2xy\partial_y,\quad -X_3:=y\partial_x-x\partial_y.$$

The vector fields $X_i,\ i=\{0,1,2\}$, span a Lie algebra $V_Q\!\!=$P$_3$. 
The Killing form, $\kappa$, of P$_3$ in the basis $\mathcal{B}:=\{X_1,X_2,X_3\}$ reads
$$[\kappa]_\mathcal{B}=
\left(\begin{array}{ccc}
-8&0&0\\
0&-8&0\\
0&0&-2
\end{array}\right).
$$
This Killing form is non-degenerate and negative-definite. As it is a three-dimensional semi-simple Lie algebra and there are only two semi-simple three-dimensional Lie algebras $\mathfrak{sl}(2)$ and $\mathfrak{so}(3)$,  Lie algebra $V_Q=\{X_1,X_2,X_3\}$ must be isomorphic to $\mathfrak{so}(3)$. In view of the Table \ref{table3}, this Lie algebra must be diffeomorphic to P$_3$. 
	\begin{landscape}
	\begin{table}[ht!] {\tiny
			\noindent
			\caption{{\small GKO Classification of Vessiot--Guldberg Lie algebras on $\mathbb{R}^2$. 
					Functions $\xi_1(x),\ldots,\xi_r(x)$ are linearly independent,  $\eta_1(x),\ldots,\eta_r(x)$ form a base of solutions to a linear system of $r$ linear differential equations with constant coefficients. We write $\mathfrak{g}=\mathfrak{g}_1\ltimes \mathfrak{g}_2$ 
					to indicate that $\mathfrak{g}$ is the direct sum of  $\mathfrak{g}_1$ and $\mathfrak{g}_2$, where  $\mathfrak{g}_2$ is an ideal  $\mathfrak{g}$. The symbol $`+'$ in the column Kill. indicates that a Lie algebra consists of Killing vector fields relative to metric and $`-'$ is written otherwise. The column Conf. details when a Lie algebra consists of conformal vector fields relative to a definite metric, ($g_E$), or a indefinite metric ($g_H$). The symbol $`-'$ means that a Lie algebra does not consist of conformal vector fields relative to any metric.}}
			\label{table3}
			\medskip
			\noindent\hfill
			\begin{tabular}{ p{.7cm} p{2cm}    p{8cm} p{.8cm}p{1.8cm}cc}
				\hline
				&  &\\[-1.9ex]
				\#&Primitive & Basis $X_i$ &Dom $V$&Inv. distribution& Kill. & Conf.\\[+1.0ex]
				\hline
				&  &\\[-2.9ex]
				P$_1$&$A_\alpha\simeq \mathbb{R}\ltimes \mathbb{R}^2$ & $  { {\partial_x} ,    {\partial_y} ,   \alpha(x\partial_x\!+\!y\partial_y) \!+\! y\partial_x\!-\!x\partial_y},\quad \ \alpha\geq 0$&$\mathbb{R}^2$&$-$&$+(\alpha=0)$&$g_E$ \\[+1.0ex]
				P$_2$&$\mathfrak{sl}(2)$ & $ {\partial_x},   {x\partial_x \!+\! y\partial_y} ,   (x^2 \!-\! y^2)\partial_x \!+\! 2xy\partial_y$&$\mathbb{R}^2_{y\neq 0}$&$-$&$+$&$g_E$\\[+1.0ex]
				P$_3$&$\mathfrak{so}(3)$ &${     { y\partial_x \!-\! x\partial _y},     { (1 \!+\! x^2 \!-\! y^2)\partial_x \!+\! 2xy\partial_y} ,   2xy\partial_x \!+\! (1 \!+\! y^2 \!-\! x^2)\partial_y}$&$\mathbb{R}^2$&$-$&$+$&$g_E$\\[+1.0ex]
				P$_4$&$\mathbb{R}^2\ltimes\mathbb{R}^2$ &$  {\partial_x},   {\partial_y},  x\partial_x\!+\!y\partial_y,   y\partial_x\!-\!x\partial_y$&$\mathbb{R}^2$&$-$&$-$&$g_E$\\[+1.0ex]
				P$_5$&$\mathfrak{sl}(2 )\ltimes\mathbb{R}^2$ &${  {\partial_x},   {\partial_y},  x\partial_x\!-\!y\partial_y,  y\partial_x,  x\partial_y}$&$\mathbb{R}^2$&$-$&$-$&$-$\\[+1.0ex]
				P$_6$&$\mathfrak{gl}(2 )\ltimes\mathbb{R}^2$ &${ {\partial_x},    {\partial_y},   x\partial_x,   y\partial_x,   x\partial_y,   y\partial_y}$&$\mathbb{R}^2$&$-$&$-$&$-$\\[+1.0ex]
				P$_7$&$\mathfrak{so}(3,1)$ &${  {\partial_x},   {\partial_y},   x\partial_x\!+\! y\partial_y,   y\partial_x \!-\! x\partial_y,   (x^2 \!-\! y^2)\partial_x \!+\! 2xy\partial_y,  2xy\partial_x \!+\! (y^2\!-\!x^2)\partial_y}$  &$\mathbb{R}^2$&$-$&$-$&$g_E$\\[+1.0ex]
				P$_8$&$\mathfrak{sl}(3 )$ &${  {\partial_x},    {\partial_y},   x\partial_x,   y\partial_x,   x\partial_y,   y\partial_y,   x^2\partial_x\!+\!xy\partial_y,   xy\partial_x \!+\! y^2\partial_y}$&$\mathbb{R}^2$&$-$&$-$&$-$\\[+1.5ex]
				\hline
				&  &\\[-1.5ex]
				\#& One-imprimitive\!\! & Basis $X_i$ &Dom $V$&Inv. distribution& Kill. & Conf.\\[+1.0ex]
				\hline
				&  &\\[-1.9ex]
				I$_5$&$\mathfrak{sl}(2 )$ &${ {\partial_x},    {2x\partial_x\!+\!y\partial_y},   x ^2\partial_x \!+\! xy\partial_y}$&$\mathbb{R}^2_{y\neq 0}$&$\partial_y$&$-$&$-$\\[+1.0ex]
				I$_7$&$\mathfrak{gl}(2 )$ & ${  {\partial_x},   {y\partial_y} ,     x\partial_x,    x^2\partial_x\!+\! xy \partial_y}$&$\mathbb{R}^2_{y\neq 0}$ &$\partial_y$&$-$&$-$\\[+1.0ex]
				I$_{12}$&$\mathbb{R}^{r\!+\!1}$ &$ {\partial_y} ,   \xi_1(x)\partial_y, \ldots , \xi_r(x)\partial_y,\quad   r\geq 1$&$\mathbb{R}^2$ &$\partial_y$&$-$&$-$\\[+1.0ex]
				I$_{13}$&$\mathbb{R}\ltimes \mathbb{R}^{r\!+\!1}$ &$  {\partial_y} ,   y\partial_y,    \xi_1(x)\partial_y, \ldots , \xi_r(x)\partial_y,\quad   r\geq 1$ &$\mathbb{R}^2$&$\partial_y$&$-$&$-$\\[+1.0ex]
				I$_{14}$&$\mathbb{R}\ltimes \mathbb{R}^{r}$ & ${ {\partial_x},   {\eta_1(x)\partial_y} ,  {\eta_2(x)\partial_y},\ldots ,\eta_r(x)\partial_y},\quad (r> 1,\ r=1,\ \eta_1'(x)\neq\eta_1(x))$&$\mathbb{R}^2$&$\partial_y$&$-$&$-$\\[+1.0ex]
				I$_{15}$&$\mathbb{R}^2\ltimes \mathbb{R}^{r}$ &  ${ {\partial_x},    {y\partial_y} ,    {\eta_1(x)\partial_y},\ldots, \eta_r(x)\partial_y},\quad  (r> 1,\ r=1,\ \eta_1'(x)\neq\eta_1(x))$&$\mathbb{R}^2$&$\partial_y$&$-$&$-$\\[+1.0ex]
				I$_{16}$&$C_\alpha^r\!\simeq\! \mathfrak{h}_2\!\ltimes\!\mathbb{R}^{r\!+\!1}$ & ${  {\partial_x},    {\partial_y} ,   x\partial_x \!+\! \alpha y\partial_y,   x\partial_y, \ldots, x^r\partial_y},\quad   r\geq 1,\qquad \alpha\in\mathbb{R}$&$\mathbb{R}^2$&$\partial_y$&$-$&$-$\\[+1.0ex]
				I$_{17}$&$\mathbb{R}\ltimes(\mathbb{R}\ltimes \mathbb{R}^{r})$ &$  {\partial_x},    {\partial_y} ,   x\partial_x \!+\! (ry \!+\! x^r)\partial_y ,   x\partial_y, \ldots,  x^{r\!-\!1}\partial_y,\quad   r\geq 1$ &$\mathbb{R}^2$&$\partial_y$&$-$&$-$\\[+1.0ex]
				I$_{18}$&$(\mathfrak{h}_2\!\oplus\! \mathbb{R})\!\ltimes\! \mathbb{R}^{r\!+\!1}$ & $  {\partial_x},    {\partial_y},   x\partial_x,   x\partial_y,   y\partial_y,   x^2\partial_y, \ldots,x^r\partial_y,\quad r\geq 1$ &$\mathbb{R}^2$&$\partial_y$&$-$&$-$\\[+1.0ex]
				I$_{19}$&$\mathfrak{sl}(2 )\ltimes \mathbb{R}^{r\!+\!1}$ &  $  {\partial_x},    {\partial_y} ,   x\partial_y,    2x\partial _x \!+\! ry\partial_y,   x^2\partial_x \!+\! rxy\partial_y,   x^2\partial_y,\ldots, x^r\partial_y ,\quad   r\geq 1$&$\mathbb{R}^2$ &$\partial_y$&$-$&$-$\\[+1.0ex]
				I$_{20}$&$\mathfrak{gl}(2 )\ltimes \mathbb{R}^{r\!+\!1}$ &  $  {\partial_x},    {\partial_y} ,   x\partial_x,   x\partial_y,   y\partial_y,   x^2\partial_x \!+\! rxy\partial_y,   x^2\partial_y,\ldots, x^r\partial_y,\quad   r\geq 1$ &$\mathbb{R}^2$&$\partial_y$&$-$&$-$\\[1.5ex]
				\hline
				&  &\\[-1.5ex]
				\#& multiply imprimitive\!\! & Basis $X_i$ &Dom $V$&Inv. distribution& Kill. & Conf.\\[+1.0ex]
				\hline
				&  &\\[-1.9ex]
				I$_1$&$\mathbb{R}$ &$  {\partial_x} $ & $\mathbb{R}^2$&$\partial_y,\partial_x+h(y)\partial_y$&$-$&$g_E,g_H$\\[+1.0ex]
				I$_2$&$\mathfrak{h}_2$ & $  {\partial_x} ,  x\partial_x$& $\mathbb{R}^2$&$\partial_x,\partial_y$&$-$&$g_H$\\[+1.0ex]
				I$_3$&$\mathfrak{sl}(2 )$  &$  {\partial_x},  x\partial_x,  x^2\partial_x$& $\mathbb{R}^2$&$\partial_x,\partial_y$&$-$&$g_H$\\[+1.0ex]
				I$_4$&$\mathfrak{sl}(2 )$  & ${  {\partial_x \!+\! \partial_y},    {x\partial _x\!+\!y\partial_y},   x^2\partial_x \!+\! y^2\partial_y}$ &$\mathbb{R}^2_{x\neq y}$&$\partial_x,\partial_y$&$+$&$g_H$\\[+1.0ex]
				I$_6$&$\mathfrak{gl}(2 )$ & ${ {\partial_x},    {\partial_y},   x\partial_x,   x^2\partial_x}$&$\mathbb{R}^2$&$\partial_x,\partial_y$&$-$&$g_H$\\[+1.0ex]
				I$^{\alpha\neq 1}_8$&$B_{\alpha\neq 1}\simeq \mathbb{R}\ltimes\mathbb{R}^2$ &${  {\partial_x},    {\partial_y},   x\partial_x \!+\! \alpha y\partial_y},\quad  0<|\alpha|< 1$&$\mathbb{R}^2$&$\partial_x,\partial_y$&$-$&$-$\\[+1.0ex]
				I$^{\alpha=1}_8$&$B_1\simeq \mathbb{R}\ltimes\mathbb{R}^2$ &${  {\partial_x},    {\partial_y},   x\partial_x \!+\! y\partial_y}$&$\mathbb{R}^2$&$\lambda_x\partial_x+\lambda_y\partial_y$&$-$&$g_E,g_H$\\[+1.0ex]
				I$_9$&$\mathfrak{h}_2\oplus\mathfrak{h}_2$ &${ {\partial_x},    {\partial_y},   x\partial_x,  y\partial_y}$&$\mathbb{R}^2$&$\partial_x,\partial_y$&$-$&$g_H$\\[+1.0ex]
				I$_{10}$&$\mathfrak{sl}(2 )\oplus \mathfrak{h}_2$ & ${ {\partial_x},    {\partial_y} ,   x\partial_x,  y\partial_y,  x^2\partial_x }$&$\mathbb{R}^2$&$\partial_x,\partial_y$&$-$&$g_H$\\[+1.0ex]
				I$_{11}$&$\mathfrak{sl}(2 )\oplus\mathfrak{sl}(2 )$ &$  {\partial_x},    {\partial_y},   x\partial_x,   y\partial_y,   x^2\partial_x ,   y^2\partial_y $&$\mathbb{R}^2$&$\partial_x,\partial_y$&$-$&$g_H$\\[+1.0ex]
				I$_{14A}$&$\mathbb{R}\ltimes \mathbb{R}$ & $ {\partial_x},    { e^{cx}\partial_y},\quad  c\in \mathbb{R}\backslash0$&$\mathbb{R}^2$&$e^{cx}\partial_y,\partial_x+cy\partial_y$&$+$&$g_E$\\[+1.0ex]
				I$_{14B}$&$\mathbb{R}\ltimes \mathbb{R}$ & $ {\partial_x},    {\partial_y}$&$\mathbb{R}^2$&$\lambda_x\partial_x+\lambda_y\partial_y$&$+$&$g_E,g_H$\\[+1.0ex]
				I$_{15A}$&$\mathbb{R}^2\ltimes \mathbb{R}$ &  $ {\partial_x},    {y\partial_y} ,    {e^{cx}\partial_y},\quad  c \in \mathbb{R}\backslash0$&$\mathbb{R}^2$&$e^{cx}\partial_y,\partial_x+cy\partial_y$&$-$&$-$\\[+1.0ex]
				I$_{15B}$&$\mathbb{R}^2\ltimes \mathbb{R}$ &  $ {\partial_x},    {y\partial_y} ,    \partial_y$&$\mathbb{R}^2$&$\partial_x,\partial_y$&$-$&$g_H$\\[+1.0ex]
				\hline
			\end{tabular}
			\hfill}
	\end{table}
	
\end{landscape}

The vector fields $X_1,X_2,X_3$ are exactly those ones of P$_3$.  Let us consider a basis $v_1,v_2,v_3$ of $\mathfrak{so}(3)$ satisfying the same commutation relations. 
This gives rise to an associative algebra morphism $\Upsilon:U(\mathfrak{so}(3))\rightarrow S(\mathbb{R}^2)$
The Lie algebra P$_3$ admits a quadratic Casimir element
$$
C=v_1\otimes v_1+v_2\otimes v_2+4v_3\otimes v_3.
$$
Then,
$$
G_0:=\Upsilon(C)=X_1\otimes X_1+X_2\otimes X_2+4X_3\otimes X_3=(1+x^2+y^2)^2(\partial_x\otimes \partial_x+\partial_y\otimes \partial_y).
$$
This $G_0$ is non-degenerate and Theorem \ref{main} allows us to construct a Riemannian metric $g$ turning the elements of $V_Q$ into Killing vector fields relative to
$$
g=G_0^{-1}=\frac{{\rm d}x\otimes{\rm d}x+{\rm d}y\otimes {\rm d}y}{(1+x^2+y^2)^2}.
$$ 
The symplectic structure related to $g$ takes the form
$$
\omega=\frac{{\rm d}x\wedge {\rm d}y}{(1+x^2+y^2)^2}=\star 1.
$$
In virtue of Theorem \ref{main}, the Lie algebra $V_{Q}$ is a Lie algebra of Killing vector fields relative to $\omega$. As in the previous section, this symplectic form is obtained algorithmically. This is much simpler than obtaining $\omega$ by solving a system PDEs as it was accomplished previously in the literature \cite{BBHLS15}. 

\section{Acknowledgements}
J. de Lucas acknowledges partial support from the Polish National Science Centre project MAESTRO under the grant  DEC-2012/06/A/ST1/00256.

\end{document}